\newtheoremstyle{mythm}{3pt}{3pt}{}{16pt}{\bfseries}{:}{.5em}{}
\theoremstyle{mythm}
\newtheorem{theorem}{Theorem}
\newtheorem{example}{Example}
\newtheorem{definition}{Definition}
\newtheorem{remark}{Remark}
\newtheorem{lemma}{Lemma}
\newtheorem{construction}{Construction}
\newcommand{\tabincell}[2]{\begin{tabular}{@{}#1@{}}#2\end{tabular}}
\begin{document}
\title{On Secure Coded Caching via Combinatorial Method}
\author{Minquan Cheng, Dequan Liang, Ruizhong Wei

\thanks{Cheng and Liang are with Guangxi Key Lab of Multi-source Information Mining $\&$ Security, Guangxi Normal University,
Guilin 541004, China, (e-mail: chengqinshi@hotmail.com, dequan.liang@hotmail.com).}
\thanks{R. Wei is with Department of Computer Science, Lakehead University, Thunder Bay, ON, Canada, P7B 5E1,(e-mail: rwei@lakeheadu.ca).}}

\date{}
\maketitle

\begin{abstract}
Coded caching is an efficient way to reduce network traffic congestion during peak hours by storing some content at the user's local cache memory without knowledge of later demands. The goal of coded caching design is to minimize the transmission rate and the subpacketization. In practice the demand for each user is sensitive since one can get the other users' preferences when it gets the other users' demands. The first coded caching scheme with private demands was proposed by Wan et al. However the transmission rate and the subpacketization of their scheme increase with the file number stored in the library. In this paper we consider the following secure coded caching: prevent the wiretappers from obtaining any information about the files in the server and protect the demands from all the users in the delivery phase. We firstly introduce a combinatorial structure called secure placement delivery array (SPDA in short) to realize a coded caching scheme for our security setting. Then we obtain three classes of secure schemes by constructing SPDAs, where one of them is optimal. It is worth noting that the transmission rates and the subpacketizations of our schemes are independent to the file number. Furthermore, comparing with the previously known schemes with the same security setting, our schemes have significantly advantages on the subpacketizations and for some parameters have the advantage on the transmission rates.
\end{abstract}

\begin{IEEEkeywords}
Coded caching, private demands, Secure placement delivery array, transmission rate, subpacketization
\end{IEEEkeywords}
%\section{Introduction}

%\subsection{Research}
\section{Introduction}
The wireless network is increasingly suffered from the heavy network traffic since a steep increase of wireless devices connect to the internet due to the multimedia streaming, web-browsing and social networking. Furthermore, the high temporal variability of network traffic results in congestion during peak traffic times and underutilization of the network during off-peak traffic times. Caching can effectively shift traffic from peak to off-peak times \cite{BBD}. That is, fractions of popular content are stored locally in each user's cache memory across a given network. During the peak traffic times, the users can be partly served from their local cache, thereby reducing the network load. The goal is to minimize the amount of transmission in delivery phase.

\subsection{Coded caching system}
In order to further reduce the transmission amount based on the local caches during the peak traffic times, coded caching strategy was originally proposed in \cite{MN} for the shared-link broadcast networks where a single server containing $N$ files is connected to $K$ users through a shared link. Each user has a cache memory which can store $M$ files (see Fig. \ref{system}).
\begin{figure}[h]
\centering\includegraphics[width=0.4\textwidth]{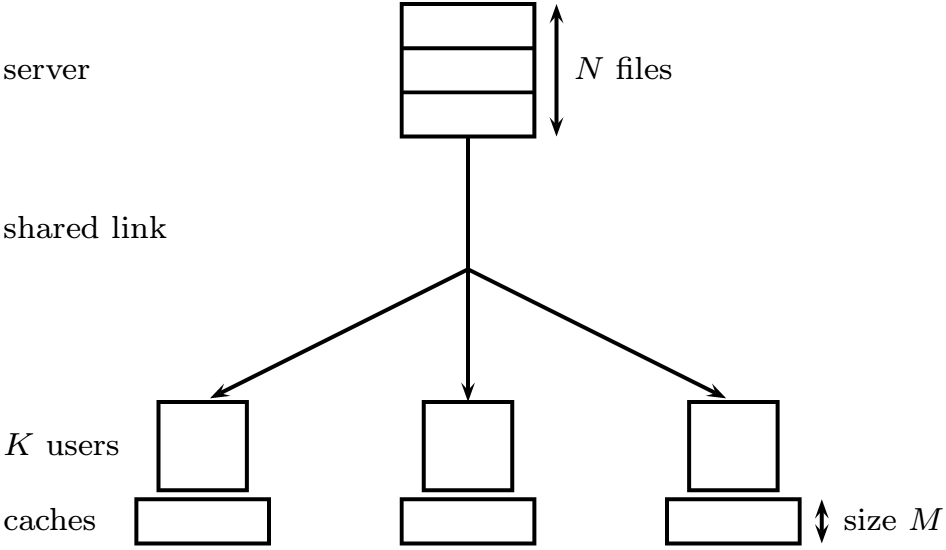}
\caption{Caching system}\label{system}
\end{figure}

An $F$-division $(K,M,N)$ coded caching scheme originally introduced in \cite{MN} consists of two phases.
\begin{itemize}
\item Placement phase: During the off-peak traffic times, the server divides each file into $F$ packets, each of which has the same size, and then places some uncoded/coded packets of each file into users' cache memories. $F$ is usually referred as subpacketization. In this phase the server does not know the users' requests in the following phase.
\item Delivery phase: During the peak traffic times, each user randomly requests one file from $N$ files independently. Then the server broadcasts some coded signals (XOR of some required packets) such that each user's demand is satisfied with the help of the contents of its cache.
\end{itemize}

Let $R$ denote the maximum transmission amount among all the request during the delivery phase. $R$ is called the rate of a coded caching scheme. Furthermore the complexity of implementing a coded caching scheme increases with the subpacketization $F$. Naturally we prefer to design a coded caching scheme with the transmission rate and the subpacketization as small as possible.

Maddah-Ali and Niesen in \cite{MN} proposed the first deterministic scheme, i.e., for any positive integers $K$, $M$ and $N$ with $M<N$, if $KM/N$ is an integer, there exists a ${K\choose KM/N}$-division $(K,M,N)$ coded caching scheme. Such a scheme is referred to as the MN scheme in this paper. In \cite{WTP} the authors showed that the MN scheme has minimum rate by the graph theory under uncode placement. Hence the MN scheme has also been widely used to various settings such as D2D networks \cite{JCM}, hierarchical networks \cite{KNMD} etc.

\subsection{Secure coded caching}
Secure coded caching was originally considered in \cite{STC}, where there are some wiretappers who can receive the broadcasted packets from the server in the delivery phase. To prevent the wiretappers from obtaining any information about the files in the server, each user stores not only the content about the files in its cache, but also some secret keys. In the delivery phase, each coded signal is encrypted by some key such that only the authorized users can decrypt it. This secure caching scheme against wiretappers was proved in \cite{BATV} to be optimal under the constraint of uncoded cache placement. Another secure shared-link caching model was proposed in \cite{RPKP} where the objective is to avoid each user to get any information about the files not required by that user. The placement and delivery phases were designed based on the MN caching scheme with an additional secret sharing precoding \cite{Shamir} on each file. It is worth noting that the secure caching scheme in \cite{RPKP} could also successfully prevent external wiretappers due to the secret sharing. According to the idea of Private Information Retrieval (PIR) problem in \cite{CGKS}, the private demand was first proposed by Wan and Caire in \cite{WC}. When all the users just require a signal distinct files, the authors proposed an ordered scheme with private demands. Concretely speaking, the authors in \cite{WC} slightly modified the MN placement strategy in the placement phase, i.e., by supplementing $N-K$ virtual users and using the MN delivery strategy to protected each user's demand, in the delivery phase. We list this scheme with the user number $K$, memory ratio $M/N$, file number $N$, subpacketization $F$ and transmission rate $R$ in Table \ref{tab-known-1}. When $N=K=2$, Kamath in \cite{Kamath} improved the lower bound of the transmission rate in \cite{STC}. The above strategies to prevent external wiretappers and internal malicious users from retrieving information about the files, were also used in extended models, such as D2D caching systems \cite{ZY,Mathar}, topological cache-aided relay networks \cite{ZY2}, erasure broadcast channels \cite{KSWO}, etc.
{\begin{table*}[!htbp]
\center
\caption{The known secure coded caching schemes\label{tab-known-1}}
\small{
\begin{tabular}{|c|c|c|c|c|c|c|}
\hline
Parameters & File Number&User number & Cache Fraction $\frac{M}{N}$
& Rate $R$   & Subpacketization $F$    \\ \hline
\tabincell{c}{\cite{WC}: $N,t>K\in Z^{+}$ \\ with $K+1<t<N$}  &$N$& $K$ &$\frac{{N-1\choose t-1}-{N-K-1\choose t-K-1}}{{N\choose t}-{N-K-1\choose t-K-1}}$&$\frac{{N\choose t+1}}{{N\choose t}-{N-K-1\choose t-K-1}}$&${N\choose t}-{N-K-1\choose t-K-1}$\\ \hline
\end{tabular} }
\end{table*}}

Considering that each user can get any file if it sends the demand to the server, in this paper we do not consider the security of files for each user. In practice the demand for each user is sensitive since one can get the other users' preferences when it gets the other users' demands. So in this paper we will consider the following security setting: prevent the wiretappers from obtaining any information about the files in the server and protect the demands from all the users in delivery phase.

It is worth noting that since each user caches some secret keys in the delivery phase, and the server sends each encrypted coded signal by some key in the delivery phase of the scheme in \cite{WC}, the new scheme can against wiretappers. However from Table \ref{tab-known-1}, we can see that the subpacketization ${N\choose t}-{N-K-1\choose t-K-1}$ approximates the MN subpacketization ${N\choose t}$. This implies that the scheme will change along with the value of $N$. Intuitively for the fixed user number $K$ and memory ratio $M/N$, performance of the scheme in Table \ref{tab-known-1} get worse with the increasing value of file number $N$. This fact will be shown by an example in Subsection \ref{sec-comparison}. So given the user number and memory ratio, it is meaningful to design a coded caching scheme satisfying private demands such that the transmission rate and the subpacketization are independent to the value of $N$.
\subsection{Contribution and organizations}
In this paper, according to the formulation in \cite{STC}, we first reformulate the shared-link coded caching model, and the constraints on the privacy of the users' demands from other users. Then based on our formulation, we characterize the placement and delivery phases for a coded caching scheme with private demands by a interesting combinatorial structure which is called secure placement delivery array (SPDA in short). This structure has a close relationship with placement delivery array proposed by Yan et al. in \cite{YCTC}. Secondly the lower bound of the minimum transmission of the scheme which can be realized by a SPDA is derived. According to our lower bound, we construct the scheme with $K$ users achieving such lower bound for any positive integer $t$, $K$ with $(t+1)|K$. Based on such construction, the constructions of our schemes with good performance for the other parameters $t$ and $K$ are proposed. We list these results in Table \ref{tab-main}.
{\begin{table*}[!htbp]
\center
\caption{The main results in this paper\label{tab-main}}
\small{
\begin{tabular}{|c|c|c|c|c|c|c|}
\hline
parameters: $t,K\in Z^{+}$ &User number & Cache Fraction $\frac{M}{N}$
&Rate $R$   & Subpacketization $F$    \\ \hline

Theorem \ref{th-SPDA-optimal}:  $(t+1)|K$  & $K$ &$\frac{K}{2K-t}$&$\frac{K-t}{t+1}\cdot\frac{K}{2K-t}$&${K\choose t}+{K-1\choose t}$\\ \hline

\tabincell{c}{Theorem \ref{th-SPDA-2}: $(t+1)\nmid K$, $\lfloor K/2\rfloor$\\  $>(t+1)$. $K'=\min\{K''|K''$ \\ $>K, (t+1)|K'',K''\in Z^{+}\}$} & $K$ &$\frac{{K-1\choose t-1}+{K'-1\choose t}}{{K\choose t}+{K'-1\choose t}}$&$\frac{{K\choose t+1}}{{K\choose t}+{K'-1\choose t}}$&${K\choose t}+{K'-1\choose t}$\\ \hline

\tabincell{c}{Theorem \ref{th-SPDA-3}: $t,K\in Z^{+}$ \\ with $(t+1)\nmid K$,$\lfloor K/2\rfloor<(t+1)$} & $K$ &$1-\frac{(K-t)(t+1)}{K(K+1)}$&$\frac{K-t}{K+1}$&${K\choose t}+{K\choose t+1}$\\ \hline
\end{tabular} }
\end{table*}}

Clearly in Table \ref{tab-main}, the transmission rates and the subpacketizations of our schemes are independent to the file number $N$. So when $N$ is far larger than $K$, our scheme have significant advantages on the subpacketization. In Subsection \ref{sec-comparison}, by an example we show that our transmission rate is close to the transmission rate of the scheme in Table \ref{tab-known-1}. Furthermore, for some memory ratio, our transmission rate is smaller than that of the scheme in Table \ref{tab-known-1}

 The rest of this paper is organized as follows. Section \ref{System Model} introduces coded caching scheme with private demands and its related concepts. In Section \ref{Sec-Coded caching scheme-SPDA}, SPDA is proposed and then its connection with coded caching scheme with private demands is established. In Section \ref{Lower bounds} the lower bound of the transmission rate of the schemes realized by SPDAs is derived. According to our lower bound, three classes of schemes with private demands are constructed in  Section \ref{sec-construciton}, where one class of the schemes achieves our lower bound. Furthermore some comparisons with the scheme in \cite{WC} are proposed. Finally conclusion is drawn in Section \ref{conclusion}.

\section{System Model with security}
\label{System Model}
In this paper, we will use the following notations unless otherwise stated. For any positive integers $K$ and $t$ with $t<K$, let $[0,K)=\{0,1,\ldots,K-1\}$ and ${[0,K)\choose t}=\{A\ |\   A\subseteq [0,K), |A|=t\}$, i.e., ${[0,K)\choose t}$ is the collection of all $t$-sized subsets of $[0,K)$.

In what follows, when we say that an event is randomly selected, it means that it is uniform randomly selected. Therefore,
for convenience, we can use the names of domain sets instead of the names of random variables over the domains in entropy and mutual information formula. In our system model, as usual, a single server containing $N$ files connects to $K$ users over a shared link, and each user has a cache memory with size of $M$ files. We also assume that there is a secure channel between the server and each of the users. But
the usage of these secure channels are very limited in the delivery phase. Denote the $N$ files by $\mathcal{W}$ $=\{{W}_0$, ${W}_1$, $\ldots$, ${W}_{N-1}\}$, $K$ users by $\mathcal{K}=[0,K)$ and the requested file numbers by ${\bf d}=(d_0,d_1,\cdots,d_{K-1})$, i.e., user $k$ requests the file ${W}_{d_k}$ where $d_k\in[0,N)$ for each $k\in \mathcal{K}$. The two phases of an $F$-division $(K,M,N)$ secure coded caching scheme can be reformulated as follows.
\begin{itemize}
\item {\bf Placement Phase:} During the off peak traffic times, each file is divided into $F$ equal packets, i.e., $W_{n}=\{W_{n,j}\ |\ j=0,1,\cdots,F-1\}$, $0\leq n<N$. Assume that each file has $B$ bits, i.e., each file $W_{i}$ is randomly chosen from $[0,2)^B$. The server randomly choose $N$ different permutation functions on the $[0,F)$, say $\Omega=\{\phi_{n}\ |\ n\in [0,N)\}$. These permutation functions are only known to the server at this phase. For each user $k$, the server stores $\mathcal{Z}_{k} \bigcup \mathcal{T}_k$ to the user's cache, where $\mathcal{Z}_{k}$ contains a certain number of the packets of file $W_n, n \in [0,N)$
whose indexes have been changed by the permutation functions $\phi_{n}$. The size of $\mathcal{Z}_k$ is less than or equal to the capacity of each user's cache memory size $MB$, where $0<M<N$. $\mathcal{T}_k$ consists of $F(1-\frac{M}{N})$ secret keys. In order to prevent the wiretapper from obtaining any information about the contents transmitted in delivery phase, similar to Sengupta et al. in \cite{STC}, all the secret keys are chose mutually independent and randomly from $[0,2)^{B/F}$. The server creates certain subgroups of users so that each signal would be sent to one of the groups.
 Assume that each group of users gets at most one multicast opportunity. Then each user $k$ will belong to $F(1-\frac{M}{N})$
 groups and have $F(1-\frac{M}{N})$ keys. We use $\mathfrak{U}_k$ to represent the indices of these secret keys cached by user $k$. That is, for each key, say $ T_{\mathcal{U}}$, all the users in group $\mathcal{U}$ have such secret key and they will receive a coded signal in the delivery phase which is encrypted by such secret key. Let $\mathfrak{U}=\bigcup_{k\in[0,K)}\mathfrak{U}_k$. In this phase, any user can get none information of any permutation. Actually the number of files is large and it is much larger than the number of the users. It is worth noticing that for each user, the size of contents for popular files cached by the user is much larger than the size of secret keys cached by the user. Hence we don't consider the cost of the cache occupied by the secret keys in the system.

From the above introduction, we have that for any $\phi _n, n\in [0.N)$ and any user $k \in [0,K)$, the following equation holds.
 {$$ I(\{ \phi_n\}; \mathcal{T}_k, \mathcal{Z}_k,\mathfrak{U}_k) = 0.$$}
\item {\bf Delivery Phase:} During the delivery phase, we assume that each user independently demands one file randomly. The users' requests are go over secret channels. The requirements of all the users are denoted by a vector $\mathbf{d}=(d_{0},d_{1},\ldots,d_{K-1})$. According to the cached content of all $K$ users $\mathcal{Z}=\{\mathcal{Z}_{0},\mathcal{Z}_{1},\ldots,\mathcal{Z}_{K-1}\}$ and the demand vector ${\bf d}$, the server generates a coded signal $T_{\mathcal{U}} \bigoplus X_{\mathcal{U}}$, $\mathcal{U}\subseteq [0,K)$ for each $\mathcal{U}\in \mathfrak{U}$. Here $X_{\mathcal{U}}$ is obtained by XOR some packet which is useful for the users in $\mathcal{U}$. It is worth noting that the indices of these packets have been carried by the corresponding permutations. In order to make each user in $\mathcal{U}$ receive successfully the coded signal $T_{\mathcal{U}} \bigoplus X_{\mathcal{U}}$ and use the secret key $T_{\mathcal{U}}$,
 the server should transmit $T_{\mathcal{U}} \bigoplus X_{\mathcal{U}}$ with its header which represents the group $\mathcal{U}$ and the indices of the packets from $X_{\mathcal{U}}$.
 Clearly the header of $T_{\mathcal{U}} \bigoplus X_{\mathcal{U}}$ is independent to the bits of $T_{\mathcal{U}}$ and $X_{\mathcal{U}}$. We assume that the total length of the headers is negligible compared to the file length $B$ such that, for simplicity, the relevant amount of transmission is the size of all the transmitted coded signals.
 By receiving the signals from the server and with the help of the cached content, each user $k$ can decode all the $F$ packets $W_{d_{k},\phi_{d_{k}}(j)}$, $j\in[0,F)$, $\phi_{d_k}\in \Omega$ of the file $W_{d_k}$. Clearly each user $k$ just obtains all the packets of the file $W_{d_{k}}$ but does not know how to reconstruct $W_{d_k}$ since the user do not know the permutation function. This implies that the user $k$ can not obtain any information of the permutation functions for each packet of $W_{d_k}$, i.e.,
\begin{eqnarray*}
I (\{\phi_n\ |\ n\in[0,N)\};\mathcal{Z}_{k},\mathcal{T}_{k},\{X_{\mathcal{U}}|\mathcal{U}\in\mathfrak{U}_k\})=0, \ \ \ \ \ j\in [0,F).
\end{eqnarray*}
Then user $k$ can not obtain any information of any index $j$ for each packet of $W_{d_k}$.
\begin{eqnarray*}
I (\{j \};\mathcal{Z}_{k},\mathcal{T}_{k},\{X_{\mathcal{U}}|\mathcal{U}\in\mathfrak{U}_k\})=0, \ \ \ \ \ j\in [0,F).
\end{eqnarray*}
So the server should send a position message to user $k$ through the secret channel, which only contains the information about the permutation function $\phi_{d_{k}}$. Then the user $k$ can reconstruct the complete file successfully by $\phi_{d_{k}}$. So the following information entropy equation holds.
\begin{eqnarray}
H(W_{d_{k}}|(\mathcal{Z}_{k}, \mathcal{T}_k, \{ T_{\mathcal{U}} \oplus X_{\mathcal{U}} | k \in \mathcal U, \mathcal{U}\in \mathfrak{U}_k \} ,\phi_{d_{k}}))=0.
\end{eqnarray}The entropy information shows that the user $k$ can obtain its requested file once the permutation function $\phi_{d_{k}}$ is received by the user $k$.
\end{itemize}

We impose a privacy constraints on system, i.e., each user can decoded its requested file, but not get any information
about the demands of other $K-1$ users.
 The following side information must hold
\begin{eqnarray}
\label{eq-privacy}
I ({\bf d}_{-k};\{X_{\mathcal{U}}:k \in \mathcal{U}, \mathcal{U}\in \mathfrak{U}_k\}| (\phi_{d_k},\mathcal{Z}_{k},d_k,\mathfrak{U}_k))=0, \ \ \ \ \ k \in [0,K)
\end{eqnarray}
where ${\bf d}_{-k}=(d_{0},d_{1},\ldots,d_{k-1},d_{k+1},\ldots,d_{K-1})$, i,e., the vector consists of all the demands except the demand of the user $k$.
 Since the total size of the position messages is far less than the total size of the encrypted message signals sent by the server, the position messages are negligible in the delivery phase. Hence the total transmission rate $R$ is shorted by
\begin{eqnarray*}
R:=\frac{1}{B}\sum\limits_{\mathcal{U}\in \mathfrak{U}}X_{\mathcal{U}}=\sum\limits_{\mathcal{U}\in \mathfrak{U}}\frac{1}{F}>0.
\end{eqnarray*}
Our objective is minimizing the total transmission rate $R$ by designing the reasonable $F$-division caching scheme while satisfying the privacy constraint of the system, i.e., the equation \eqref{eq-privacy}.

\section{Coded caching schemes with private demands based on secure placement delivery arrays}
\label{Sec-Coded caching scheme-SPDA}
In this section, we will propose a concept of secure placement delivery array (SPDA in short). Using a SPDA, we will  construct an efficient coded caching scheme with the private property when all the requested files are different from each other. For convenience, in the delivery phase we omit the header of each encrypted coded signals and the permutation for each required file sent by the server in the following discussion. As Sengupta et al. in \cite{STC} pointed out that if all the secret keys are chosen mutually independent and randomly from $[0,2)^{B/F}$, the wiretappers reveal no information of the coded signals transmitted in delivery phase. So we shall
 not  consider  the security problems related to the wiretappers in the following.

\subsection{Secure placement delivery array}
\begin{definition}
\label{def-SPDA}
%(\cite{YCTC})
 For positive integers $K$ and $F$, an $F\times K$ array  $\mathbf{P}=(p_{j,k})$, $j \in [0,F)$, $k \in [0,K)$, whose entries are in the set $\{\ast\} \bigcup [0,S)$, is called a $(K,F,Z,S)$ secure placement delivery array (SPDA) if it satisfies the following conditions:
\begin{itemize}
\item[C1.] The symbol "$\ast$" appears $Z$ times in each column;
\item[C2.] Each $s \in [0,S)$ appears at least four times;
\item[C3.] For each integer $s\in [0,S)$, all the rows and the columns having $s$ form the following subarray
\begin{eqnarray}\label{Eqn_Matrix_1}
\mathbf{P}^{(s)}=\left(\begin{array}{cccc}
      s & s & \cdots  & s\\
      s & * & \cdots & *\\
      * & s & \cdots & *\\
      \vdots& \vdots &\ddots & \vdots\\
      * & * & \cdots & s
    \end{array}\right)
\end{eqnarray} with respect to row/column permutations.
%If each integer appears $g$ times in a $(K,F,Z,S)$ SPDA $\mathbf{P}$,where $g$ is a constant,$\mathbf{P}$ is called a $g$-(K,F,Z,S) SPDA or $g$-SPDA.
\end{itemize}
\end{definition}

%\begin{definition}
%\label{def-SPDA}
%For positive integers $K$ and $F$, an $F\times K$ array  $\mathbf{P}=(p_{i,j})$, $i \in [0,F)$, $j \in [0,K)$, whose entries are in the set $\{\ast\} \bigcup \{{s^{(k_s)}}|s \in [0,S)\}$,where $k_{s} \in [0,K)$ is  uniquely determined by $s$, is called a $(K,F,Z,S)$ D2D placement delivery array (DPDA),if it satisfies C1-C3 and the following conditions:
%\begin{itemize}
%\item[C4.] If $p_{i,j}$ =$s^{(i_s)},s \in [0,S)$,then $p_{i_s,j}=\ast$;
%\item[C5.] Given an integer $j \in [0,K)$,denote $$I_{j,1}=\{i|p_{i^{'},j}=p_{i,j^{'}}\neq \ast,i^{'} \in [0,F),j^{'} \in [0,K)\},    I_{j,2}=\{i_{s}|p_{i,j}=s^{i_{s}},i \in [0,F)\}.$$
%    For any $j \in [0,K)$,the following equations hold.
%    $$\vert I_{j,2}\vert=F-Z,I_{j,1}\bigcap I_{j,2}=\varnothing$$
%\end{itemize}
%\end{definition}

\begin{example}
\label{exam-1}
Now let us consider the following array.
\begin{eqnarray}
\label{eq-3-5-SPDA}
    \left(
      \begin{array}{ccc}
        0 & 0 & \ast \\
        1 & \ast & 1 \\
        \ast & 2 & 2 \\
        \ast & 0 & 1 \\
        0 & \ast & 2 \\
        1 & 2 & \ast \\
      \end{array}
    \right)
    \end{eqnarray}
When $s=0$, $1$ and $2$, we have
\begin{eqnarray*}
    \left(
      \begin{array}{cc}
        0 & 0\\
        \ast & 0  \\
        0 & \ast
      \end{array}
    \right)\ \
    \left(
      \begin{array}{cc}
        1 &  1 \\
        \ast  & 1 \\
        1 &  \ast
      \end{array}
    \right)\ \
    \left(
      \begin{array}{cc}
        2 & 2\\
        \ast & 2 \\
        2 & \ast
      \end{array}
    \right).
\end{eqnarray*} So from Definition \ref{def-SPDA} the above array in \eqref{eq-3-5-SPDA} is a $(3,6,2,3)$ SPDA.
\end{example}

\subsection{Private coded caching scheme realized by SPDA}
\label{sub-SPDA-CCS}
Based on a $(K,F,Z,S)$ SPDA $\mathbf{P}=(p_{j,k})$ with $j\in [0,F)$ and $k\in[0,K)$, an $F$-division caching scheme for a $(K,M,N)$ caching system with $M/N=\frac{F+Z}{2F}$, transmission rate $R=S/F$ can be  obtained as follows:
\begin{itemize}
\item [1.] \textbf{Placement Phase:} For each integer $s\in [0,S)$, let $\mathcal{U}_s$ denote the set of columns containing $s$.
User $k$ belongs to $\mathcal{U}_s$ if and only if the $k$th column contains $s$.
Each user $k$ is cached the following secret keys.
  \begin{align}
  \label{eq-key}
  \mathcal{T}_k=\{T_{\mathcal{U}_s}\ |\ k\in \mathcal{U}_{s},s \in[0,S)\}
  \end{align}

  The server randomly chooses $N$ permutation functions $\Omega=\{\phi_{n}\ |\  n\in[0,N)\}$ on $[0,F)$, and then split each file $W_n$ into $F$ packets with equal size, $i.e.$ $W_n=\{W_{n,\phi_{n}(j)}\ |\ j\in[0,F)\}$, $n\in[0,N)$. Let $j^{(s)}$ denote the row containing $s$ at least twice (for each $s$, there is exactly one such a row).
   The server places the following packets into the user $k$'s caches by means of the permutation function $\phi_{n}$,
\begin{eqnarray}
 \label{eq_p1}
  \mathcal{Z}_k=\{W_{n,\phi_{n}(j)}\ |\ p_{j,k}=*, j\in [0,F),  n\in[0,N)\} \bigcup \{W_{n,\phi_{n}(j^{(s)})}|p_{j^{(s)},k}=s,\ s\in [0,S)\}
\end{eqnarray}
  Each user stores $N\cdot (Z+\frac{F-Z}{2})$ packets. Since each packet has size $B/F$, user $k$ caches $N\cdot (Z+\frac{F-Z}{2})\cdot \frac{B}{F}=N\cdot \frac{F+Z}{2}\cdot \frac{B}{F}=MB$ bits.
\item [2.] \textbf{Delivery Phase:} Once the server receives the request vector $\mathbf{d}=(d_0,d_1,\cdots,d_{K-1})$, at the  time slot $s$, $0\le s<S$, it sends $T_{\mathcal{U}_{s}}\bigoplus X_{\mathcal{U}_s}$ to $\mathcal{U}_s$, where
\begin{eqnarray}\label{eq_p2}
      X_{\mathcal{U}_s}=\left(\bigoplus_{p_{j,k}=s,j\in [0,F)\setminus\{j^{(s)},k\in \mathcal{U}_{s}}W_{d_k,\phi_{d_k}(j)}\right)\bigoplus\left(\bigoplus_{n\in [0,N)\setminus \{d_{k},k\in \mathcal{U}_{s}\}}W_{n,\phi_{n}(j^{(s)})}\right).
\end{eqnarray}
\end{itemize}

Assume that in SPDA $\mathbf{P}$ there are $2g$ entries $p_{j^{(s)},k_0}=p_{j^{(s)},k_1}=\cdots=p_{j^{(s)},k_{g-1}}=s$ and $p_{j_0,k_0}=\cdots=p_{j_{g-1},k_{g-1}}=s$ for some positive integer $g>1$. From C3 of Definition \ref{def-SPDA} we have $0\leq j_0,\cdots,j_{g-1}, j^{(s)}< F$, $0\leq k_0,\cdots,k_{g-1}<K$ and the following sub-array formed by rows $j_0$, $\cdots$, $j_{g-1}$, $j^{(s)}$ and columns $k_0$, $\cdots$, $k_{g-1}$.
\begin{eqnarray}\label{Eqn_Matrix_2}
\begin{array}{c|cccc}
       &k_0   & k_1  &\cdots &k_{g-1}\\   \hline
j^{(s)}& s    & s    &\cdots & s\\
j_0    & s    & *    &\cdots & *\\
j_1    &  *   & s    &\cdots & *\\
\vdots &\vdots&\vdots&\ddots & \vdots\\
j_{g-1}& *    & *    &\cdots & s
    \end{array}
\end{eqnarray} According to \eqref{eq_p2}, at each time slot $s\in[0,S)$, the sever sends
\begin{eqnarray*}
      T_{\mathcal{U}_{s}}\bigoplus
      \left(\left( \bigoplus_{0\leq l< g}W_{d_{k_l},\phi_{d_{k_l}}(j_{l})} \right) \bigoplus \left(\bigoplus_{n\in [0,N)\setminus \{d_{k},k\in \mathcal{U}_{s}\}}W_{n,\phi_{n}(j^{(s)})}\right)\right).
\end{eqnarray*}
In \eqref{Eqn_Matrix_2}, for each column $v$, $0\leq v <g$, all the entries are stars except for the $0$-th one and the $(v+1)$-th one.
By \eqref{eq_p1} this implies that user $k_v$ has already cached all the other packets $W_{d_{k_l},\phi_{d_k}(j_{l})}$, $1\leq l\neq (v+1)< g$ and all the packets $W_{n,\phi_{n}(j^{(s)})},n \in [N]$ in the placement phase. At the same time, user $k_v$ cached the secret key $T_{\mathcal{U}_s}$, $k_v \in \mathcal{U}_{s}$. Then it can easily decrypt the signal the server transmitted at the time slot $s$ and decode the desired packet $W_{d_{k_v},\phi_{d_k}(j_v)}$. Since the server sends $S$ packets for each possible request $\mathbf{d}$, the rate of the scheme is given by $S/F$.
\begin{example}
Using the $(3,6,4,3)$ SPDA in \eqref{eq-3-5-SPDA}, we can obtain a $6$-division $(3,\frac{8}{3},4)$ coded caching scheme in the following way.
\begin{itemize}
  \item Placement phase. The server randomly chooses $4$ secret permutation function, $\phi_{n}$, $n\in[0,4)$, which is only know for the server, and splits each file into $6$ packets, denoted by $W_{n}=\{W_{n,\phi_{n}(0)},\ldots,W_{n,\phi_{n}(5)}\}$, $n \in [0,4)$.
By \eqref{eq-key}, users cache the following following secret keys.
\begin{align*}
\mathcal{T}_{0}=\{T_{\{0,1\}},T_{\{0,2\}}\}, \ \ \ \ \mathcal{T}_{1}=\{T_{\{0,1\}},T_{\{1,2\}}\}, \ \ \ \ \mathcal{T}_{2}=\{T_{\{0,2\}},T_{\{1,2\}}\}
\end{align*}
Clearly $\mathfrak{U}=\{\{0,1\},\{0,2\},\{1,2\}\}$. By \eqref{eq_p1},  the users cache the following packets.
  \begin{align*}
  \mathcal{Z}_{0}=\{W_{n,\phi_{n}(0)},W_{n,\phi_{n}(1)},W_{n,\phi_{n}(2)},
      W_{n,\phi_{n}(3)}|n \in [0,4)\}\\
  \mathcal{Z}_{1}=\{W_{n,\phi_{n}(0)},W_{n,\phi_{n}(1)},W_{n,\phi_{n}(2)},
      W_{n,\phi_{n}(4)}|n \in [0,4)\}\\
  \mathcal{Z}_{2}=\{W_{n,\phi_{n}(0)},W_{n,\phi_{n}(1)},W_{n,\phi_{n}(2)},
      W_{n,\phi_{n}(5)}|n \in [0,4)\}
  \end{align*}

\item Delivery phase. Assume that the request vector $\mathbf{d}=(0,1,2)$. Then for each subset $\mathcal{U}\in \mathfrak{U}$, the server transmits $T_{\mathcal{U}_s}\bigoplus X_{\mathcal{U}_s}$ as follows,
\begin{align*}
s=0:\ T_{\{0,1\}}\bigoplus W_{0,\phi_{0}(4)}\bigoplus W_{1,\phi_{1}(3)}\bigoplus W_{2,\phi_{2}(0)}\bigoplus W_{3,\phi_{3}(0)}\\
s=1:\ T_{\{0,2\}}\bigoplus W_{0,\phi_{0}(5)}\bigoplus W_{1,\phi_{1}(1)}\bigoplus W_{2,\phi_{2}(3)}\bigoplus W_{3,\phi_{3}(1)}\\
s=2:\ T_{\{1,2\}}\bigoplus W_{0,\phi_{0}(2)}\bigoplus W_{1,\phi_{1}(5)}\bigoplus W_{2,\phi_{2}(4)}\bigoplus W_{3,\phi_{3}(2)}
\end{align*}
\end{itemize}
We can check that each user will be able to decode its required files. Furthermore, it is easy to check that each coded signal has exact one packet of each file. So for each user $k$, it can not get any information of the other users' demands.
%Since it only knows its permutation of the required file and has no information of other permutation of the permutation for other files.
\end{example}

%In fact we can also show that the scheme realized by the above method can also satisfy private demand for all the users.
From the above discussion , Algorithm \ref{alg:SPDA}, which can be used to realize a coded caching scheme by a SPDA, is obtained. Moreover, we claim that such a scheme satisfies our security setting for all the users when all the requested files are different from each other. %That is the following result.

\par\begin{algorithm}[http!]
\caption{A secure coded caching scheme based on SPDA}\label{alg:SPDA}
\LinesNumbered
\KwIn{Placement$(\mathbf{P}$, $\mathcal{W})$}
Chooses $N$ permutation functions $\Omega=\{\phi_{n}\ |\  n\in[0,N)\}$ on $[0,F)$ randomly\;
Split each file $W_n\in\mathcal{W}$ into $F$ packets, i.e., $W_{n}=\{W_{n,\phi_{n}(j)}\ |\ j\in[0,F)\}$\;
Let $j^{(s)}$ be the row number containing $s$ at least twice for each $s\in[0,S)$\;
\For{$k\in \mathcal{K}$}{
$\mathcal{Z}_k\leftarrow\mathcal{Z}_k=\{W_{n,\phi_{n}(j)}\ |\ p_{j,k}=*, j\in [0,F),  n\in[0,N)\} \bigcup \{W_{n,\phi_{n}(j^{(s)})}|p_{j^{(s)},k}=s,\ s\in [0,S)\}$
}
\KwIn{Delivery$(\mathbf{P}, \mathcal{W},{\bf d})$}
\For{$s\in[0,S)$}{
    Denote column set each of which contains $s$ by $\mathcal{U}_{s}$\;
    Server broadcasts the following encrypted coded signal\;
    $ T_{\mathcal{U}_{s}}\bigoplus X_{\mathcal{U}_s}=T_{\mathcal{U}_{s}}\bigoplus\left(\bigoplus\limits_{p_{j,k}=s,j\in [0,F)\setminus\{j^{(s)}\},k\in \mathcal{U}_{s}}W_{d_k,\phi_{d_k}(j)}\right)\bigoplus\left(\bigoplus\limits_{n\in [0,N)\setminus \{d_{k},k\in \mathcal{U}_{s}\}}W_{n,\phi_{n}(j^{(s)})}\right)$\;
}
\end{algorithm}
\begin{theorem}
\label{th-main}
Given a $(K,F,Z,S)$ SPDA, using Algorithm \ref{alg:SPDA} we can obtain a $(K,M,N)$ coded caching scheme with $M/N=\frac{F+Z}{2F}$, transmission rate $R=\frac{S}{F}$ and the subpacketization $F$ which satisfies private demand for any user, when all the requested files are different from each other.
\end{theorem}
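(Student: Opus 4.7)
The plan is to verify each claim of Theorem~\ref{th-main} in turn --- cache size, subpacketization, transmission rate, and privacy --- by a direct accounting argument based on the structural properties C1--C3 of the SPDA and the explicit form of the encrypted signals in Algorithm~\ref{alg:SPDA}.

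I would begin with the parameters. The subpacketization is $F$ by the file split in line~2 of the algorithm. For the cache size, I need to count the packets in $\mathcal{Z}_k$. Condition C1 supplies $NZ$ packets (one from each of $N$ files at each of the $Z$ starred rows). For the remaining contribution from the $j^{(s)}$-rows, I would invoke C3: every symbol $s$ appearing in column $k$ occupies exactly two cells of that column --- one at the heavy row $j^{(s)}$ and one at some other row --- and hence the number of distinct symbols in column $k$ equals $(F-Z)/2$. This yields $N(F-Z)/2$ further packets, for a total of $N(F+Z)/2$ packets of size $B/F$, i.e.\ cache load $MB$ with $M/N=(F+Z)/(2F)$. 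The rate $R = S/F$ is then immediate since Algorithm~\ref{alg:SPDA} transmits exactly one signal of size $B/F$ for each $s\in[0,S)$.

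Next I would establish correctness. Fix $s\in[0,S)$ and a user $k_v\in\mathcal{U}_s$. The goal is to show that XOR-ing the broadcast signal with $T_{\mathcal{U}_s}$ and the appropriate cached packets leaves only $W_{d_{k_v},\phi_{d_{k_v}}(j_v)}$. Reading off the subarray \eqref{Eqn_Matrix_2}, for every $l\neq v$ the entry $(j_l,k_v)$ is a star, so $W_{d_{k_l},\phi_{d_{k_l}}(j_l)}\in\mathcal{Z}_{k_v}$ by the first piece of \eqref{eq_p1}; and for every $n$ not requested within $\mathcal{U}_s$, $W_{n,\phi_n(j^{(s)})}\in\mathcal{Z}_{k_v}$ by the second piece. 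Running over all slots $s$ with $k_v\in\mathcal{U}_s$ decodes one new packet per slot, accounting for the remaining $(F-Z)/2$ of the $F$ packets of $W_{d_{k_v}}$ beyond the $(F+Z)/2$ already in cache.

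The crux is the privacy claim \eqref{eq-privacy}, and this is where I expect the main obstacle. Two structural observations drive it. First, by construction each $X_{\mathcal{U}_s}$ XORs exactly one packet from every file in the library: the $g$ distinct demands in $\mathcal{U}_s$ (distinct because all requested files differ) contribute $g$ packets via the first sum, and the remaining $N-g$ files each contribute the packet at row $j^{(s)}$. Second, user $k$ only learns $\phi_{d_k}$; for any $n\neq d_k$ the permutation $\phi_n$ is uniform and independent of everything visible to $k$. I would then run a one-time-pad style argument: fix any $\mathbf{d}_{-k}$, any realization of $\phi_{d_k}$, and any value of $(\mathcal{Z}_k,\mathfrak{U}_k)$; the joint distribution of the signals $\{X_{\mathcal{U}}:k\in\mathcal{U}\}$ is a function of $\{\phi_n(j):n\neq d_k,\ j\in[0,F)\}$ together with the i.i.d.\ uniform packets of the non-requested files, and by the symmetry of these random choices one can, for any alternative $\mathbf{d}'_{-k}$, rename permutations so that the observed tuple has the same probability. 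This symmetry, formalized as an equality of conditional distributions, yields \eqref{eq-privacy}. Matching random permutations across different putative demand vectors without double-counting signals that share a row $j^{(s)}=j^{(s')}$ is the delicate step I would pay most attention to.
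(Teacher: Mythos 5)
Your proposal is correct and follows essentially the same route as the paper: the parameter and decodability claims are verified by the same direct counting on the subarrays of C3, and privacy is established by showing that, conditioned on $(\phi_{d_k},\mathcal{Z}_k,d_k,\mathfrak{U}_k)$, the distribution of the observed signals does not depend on ${\bf d}_{-k}$, using exactly your two observations (one packet per file per signal, and the secrecy/independence of $\phi_n$ for $n\neq d_k$). The only difference is that the paper replaces your ``renaming'' symmetry argument by an explicit factorization over files and a computation of the resulting conditional probabilities, namely $\frac{1}{((F-Z)/2)!}$ for $n=d_k$ and $\frac{1}{\binom{(F+Z)/2}{(F-Z)/2}((F-Z)/2)!}$ for $n\neq d_k$, whose product is a constant; this also disposes of the ``shared heavy row'' worry you raise, since the heavy rows $j^{(s)}$ of the symbols in column $k$ are automatically distinct.
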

\begin{proof}Assume that $\mathbf{P}$ is a $(K,F,Z,S)$ SPDA. From Algorithm \ref{alg:SPDA}, all the user can get the required file and the transmission rate is $R=S/F$. When all the requested files are random and different from each other, let us consider the privacy constraint of the scheme. As we have seen, the coded caching scheme can be obtained from a $(K,F,Z,S)$ SPDA by Algorithm \ref{alg:SPDA}. So we only need to show the equation \eqref{eq-privacy} is satisfied. For each user $k$, $k\in[0,K)$, we show the privacy holds, i.e.,
\begin{eqnarray}
\label{eq-huxinxi}
&&I ({\bf d}_{-k};\{X_{\mathcal{U}}\ |\ k \in \mathcal{U},\mathcal{U}\in \mathfrak{U}_k\}|(\phi_{d_k},\mathcal{Z}_{k},d_{k},\mathfrak{U}_k)) =0
\end{eqnarray}We now focus on one user $k$, one demand vector $d_k$, one cache realization $\mathcal{Z}_k$ and one permutation function $\phi_{d_k}$. For each $s\in[0,S)$, given $d_k$, $\mathcal{Z}_k$, $\phi_{d_k}$ and $\mathfrak{U}_k$. Clearly \eqref{eq-huxinxi} equals $0$ if for any demand ${\bf d}_{-k}$, the following probability
\begin{eqnarray}
\label{eq-zhiji}
\begin{split}
&Pr\left(\{X_{\mathcal{U}_s}|k \in \mathcal{U}_{s},s\in [0,S),n\in[0,N)\} |({\bf d}_{-k},\phi_{d_k},\mathcal{Z}_{k},d_{k},\mathfrak{U}_k)\right)
%=&\prod\limits_{n\in[0,N)}Pr((\phi_n(j_{\mathcal{U}_s,n})|k \in \mathcal{U}_{s},s\in [0,S))|{\bf d}_{-k})
\end{split}
\end{eqnarray}does not depend on ${\bf d}_{-k}$.
%Here the last equation follows since the events $\mathcal{Z}_{k}$, $d_{k}$ and $P_{k}$ are independent of ${\bf d}_{-k}$.
By \eqref{eq_p2}, each coded signal $X_{\mathcal{U}_s}$, $k \in \mathcal{U}_{s}$, $s\in [0,S)$ contains exactly $N$ packets each of which belongs to a unique file. Denote the packet of the $n$-th file by $X_{\mathcal{U}_s,n}$ from $X_{\mathcal{U}_s}$ and the index of packet $X_{\mathcal{U}_s,n}$ by $j_{\mathcal{U}_s,n}$. Since the contents of each file is independent to the demands from the users and the content of $X_{\mathcal{U}_{s,n}}$ is correspondent to $\phi_n$,
\eqref{eq-zhiji} can be written as follows.
\begin{eqnarray}
\label{eq-inror-1}
\begin{split}
&Pr\left(\{X_{\mathcal{U}_s,n}|k \in \mathcal{U}_{s},s\in [0,S),n\in[0,N)\}|({\bf d}_{-k},\phi_{d_k},\mathcal{Z}_{k},d_{k},\mathfrak{U}_k)\right)\\
=&Pr\left(\{\phi_n(j_{\mathcal{U}_s,n})|k \in \mathcal{U}_{s},s\in [0,S),n\in[0,N)\}|({\bf d}_{-k},\phi_{d_k},\mathcal{Z}_{k},d_{k},\mathfrak{U}_k)\right)\\
%=&Pr\left((\phi_n(j_{\mathcal{U}_s,n})|k \in \mathcal{U}_{s},s\in [0,S),n\in[0,N))|{\bf d}_{-k},\phi_{d_k},\mathcal{Z}_{k},d_{k},\mathfrak{U}_k\right)\\
=&\prod\limits_{n\in[0,N)}Pr\left(\{\phi_n(j_{\mathcal{U}_s,n})|k \in \mathcal{U}_{s},s\in [0,S)\}|({\bf d}_{-k},\phi_{d_k},\mathcal{Z}_{k},d_{k},\mathfrak{U}_k)\right)
\end{split}
\end{eqnarray}
 The last equation holds since the permutation functions $\phi_0$, $\phi_1$, $\ldots$, $\phi_{N-1}$ in the placement are independent. Now we consider the probabilities of events $$\{\phi_{d_k}(j_{\mathcal{U}_s,d_k})|k \in \mathcal{U}_{s},s\in [0,S)\}$$ and $$\{\phi_n(j_{\mathcal{U}_s,n})|k \in \mathcal{U}_{s},s\in [0,S)\},\ \ \ n\in [0,N)\setminus\{d_k\}$$
 for given ${\bf d}_{-k}$, $\phi_{d_k}$, $\mathcal{Z}_{k}$, $d_{k}$ and $\mathfrak{U}_k$, respectively, according to their probability space.
\begin{itemize}
  \item When $n=d_{k}$, there are $F-(Z+\frac{F-Z}{2})=\frac{F-Z}{2}$ packets of the file $W_{d_k}$ which is not in user $k$'s cache. Without lose of generality we assume that such the indices of such packets are $\Lambda_{d_k}=\{\phi_{d_k}(0)$, $\phi_{d_k}(1)$, $\ldots$, $\phi_{d_k}((F-Z)/2-1)\}$. In addition, each required packet occurs in exact one coded signal of $(X_{\mathcal{U}_s}|k \in \mathcal{U}_{s},s\in [0,S))$. So all the permutation of $\Lambda_{d_k}$ is the probability space of random event $(\phi_{d_k}(j_{\mathcal{U}_s,d_k})|k \in \mathcal{U}_{s},s\in [0,S))$. Denote all the permutation combinatorial functions of $\Lambda_{d_k}$ by $\{\psi_h\ |\  h\in[0,((F-Z)/2)!)\}$. Then for each $h\in [0,((F-Z)/2)!)$, we have
      \begin{eqnarray}
      \label{eq-prb-1}
  Pr\left(\left\{\phi_{d_k}(j_{\mathcal{U}_s,d_k})|k \in \mathcal{U}_{s},s\in [0,S)\right\}=\psi_h\left(\Lambda_{d_k})\right)|({\bf d}_{-k},\phi_{d_k},\mathcal{Z}_{k},d_{k},\mathfrak{U}_k)\right)%\\
%  &=&\prod_{s\in S, k \in \mathcal{U}_{s}}Pr\{\{X_{s,i}\}=\{x_{s,i}\}|P_{k},d_{k},z_{k},{\bf d}_{-k}\}\\
  =\frac{1}{\left(\frac{F-Z}{2}\right)!}.
  \end{eqnarray}
  \item When $n\neq d_{k}$, there are $Z+\frac{F-Z}{2}=\frac{F+Z}{2}$ packets of the file $W_{n}$, $n\in [0,N)\setminus\{d_k\}$ cached by user $k$. Similarly denote these indices of such packets by $\Lambda_{n}=\{\phi_{n}(0)$, $\phi_{n}(1)$, $\ldots$, $\phi_{d_k}((F+Z)/2-1)\}$. In addition, each required chosen packet occurs in exact one coded signal of $(\{X_{\mathcal{U}_s}|k \in \mathcal{U}_{s},s\in [0,S)\}$. So all the permutation of  any $(F+Z)/2$ indices of $\Lambda_{n}$ is the probability space of random event $(\phi_{n}(j_{\mathcal{U}_s,d_k})|k \in \mathcal{U}_{s},s\in [0,S))$. Denote all the permutation combinatorial functions of $\Lambda_{n}$ by $\{\mu_h\ |\ h\in [0,{(F+Z)/2\choose (F-Z)/2}((F-Z)/2)!)\}$. Then for each $h\in [0,{(F+Z)/2\choose (F-Z)/2}((F-Z)/2)!)$, we have
  \begin{eqnarray}
  \label{eq-prb-2}
  Pr\left(\left\{\phi_{n}(j_{\mathcal{U}_s,n})|k \in \mathcal{U}_{s},s\in [0,S))=\mu_h(\Lambda_{n})\right\}|({\bf d}_{-k},\phi_{d_k},\mathcal{Z}_{k},d_{k},\mathfrak{U}_k)\right) =\frac{1}{{(F+Z)/2\choose (F-Z)/2}\left(\frac{F-Z}{2}\right)!}
  \end{eqnarray}
\end{itemize}
Submitting \eqref{eq-prb-1} and \eqref{eq-prb-2} into \eqref{eq-zhiji} and by \eqref{eq-inror-1} we can see that
\begin{eqnarray*}
&&Pr\left(\{X_{\mathcal{U}_s}|k \in \mathcal{U}_{s},s\in [0,S),n\in[0,N)\} |({\bf d}_{-k},\phi_{d_k},\mathcal{Z}_{k},d_{k},\mathfrak{U}_k)\right)\\
&=&Pr\left(\left\{\phi_n(j_{\mathcal{U}_s,n})|k \in \mathcal{U}_{s},s\in [0,S),n\in[0,N)\right\}|({\bf d}_{-k},\phi_{d_k},\mathcal{Z}_{k},d_{k},\mathfrak{U}_k)\right)\\
&=&\frac{1}{\left(\frac{F-Z}{2}\right)!}\cdot \left( \frac{1}{{(F+Z)/2\choose (F-Z)/2}\left(\frac{F-Z}{2}\right)!}\right)^{N-1}.\\
\end{eqnarray*}
is an constant number for any ${\bf d}_{-k}$. This implies that for  given $d_k$, $\mathcal{Z}_k$, $\phi_{d_k}$ and $\mathfrak{U}_k$, the event $\{X_{\mathcal{U}_s}|k \in \mathcal{U}_{s},s\in [0,S),n\in[0,N)\}$ and the event ${\bf d}_{-k}$ are independent.

Then the proof is completed.
\end{proof}
\begin{remark}
For a $(K,F,Z,S)$ SPDA $\mathbf{P}$,  the following statements give further explanations about the related  coded caching scheme.
\begin{itemize}
\item If entry $p_{j,k}=*$ or $p_{j^{(s)}}=s$ for each $s\in[0,S)$, then user $k$ has already cached the packets induced by $j$ and $j^{(s)}$  of  each of all the files in server. The server sums of all the packets indexed by $j^{(s)}$ of all the unrequired files, i.e., $X'_{\mathcal{U}_s}=\bigoplus_{n\in [0,N)\setminus \{d_{k},k\in \mathcal{U}_{s}\}}W_{n,\phi_{n}(j^{(s)})}$.
\item If entry $p_{j,k}=s$  and $s$ occurs exact once in the $j$th row, it means that all the packets induced by $j$ of all the files are not stored by user $k$. Then the server looks at all the such row numbers $j$ of $\mathbf{P}$. The server gets the sum of all the the required packets , i.e., $X''_{\mathcal{U}_s}=\bigoplus_{p_{j,k}=s,j\in [0,F)\setminus\{j^{(s)}\},k\in \mathcal{U}_{s}}W_{d_k,\phi_{d_k}(j)}$, and gets the coded signal $X_{\mathcal{U}_s}=X''_{\mathcal{U}_s}\bigoplus X'_{\mathcal{U}_s}$. The third property of the SPDA guarantees the user can get exactly one requested packet from encrypted coded signal.
\item In this paper, we only consider the coded signals containing multicast messages. So each integer must occur at least four times.
\end{itemize}
\end{remark}
\begin{remark}
From Theorem \ref{th-main}, we can see that each coded signal contains exact one packet of each file, and each packet occurs at most once in all the coded signals. Then a user can decode a required packet from a coded signal only if it has already cached the other packets from this coded signal. So the number of packet for each file cached by the user $k$ must be larger than or equal to the number of packet for each file uncached by user $k$. This implies that the memory ratio must be larger than or equal to $1/2$. If the memory ratio is less than $1/2$, we can divide each file into two parts. Assume that the first part consists of the previous $B_1$ bits of each file. When the ratio $MB/(NB_1)>1/2$, we can apply SPDA for the first part. For the second part, the server just broadcasts each part of the required files with unique secret key for each user.
\end{remark}

From Theorem \ref{th-main},  an $F$-division $(K,M,N)$ secure coded caching scheme can be obtained by constructing an appropriate SPDA when all the requested files are distinct. So we only focus on constructing SPDAs in the following.
\section{The minimum transmission rate}
\label{Lower bounds}
From Theorem \ref{th-main}, given the subpacketization $F$, user number $K$ and the memory ratio $M/N$, the transmission rate $R$ of the secure coded caching scheme realized by a $(K,F,Z,S)$ SPDA is determined by the value of $S$. So it is meaningful to consider the lower bound of the the parameter $S$ for a $(K,F,Z,S)$ PDA. In this section, the lower bound of $S$ in a SPDA is derived by extremal combinatorial method.
\begin{lemma}
\label{le-lower bound}
If there exists a $(K,F,Z,S)$ SPDA $\mathbf{P}$ with integer set $[0,S)$, then
\begin{eqnarray}
\label{eq-Lowebound-K<F-1}
\frac{K(F-Z)}{2+2K-\frac{2 K(F-Z)}{F+Z}}\leq S.
\end{eqnarray}%
%Further the above equality holds if and only if each row has $n/F$ integer entries and each integer $s\in [0,S)$ occurs exactly $n/S$ times.
\end{lemma}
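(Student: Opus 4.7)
Set $x := (F-Z)/2$ and $y := (F+Z)/2$, so the claimed inequality rewrites as $S(y+KZ) \geq Kxy$. I plan to reduce this to a Cauchy--Schwarz bound on $\sum_{s=0}^{S-1} g_s^2$, where $g_s := |\mathcal{U}_s|$ denotes the number of columns containing the integer $s$. By C3, every integer $s$ appearing in a given column contributes exactly two non-$\ast$ entries to that column (one in the special row $j^{(s)}$ and one in its diagonal row), so each column contains exactly $x$ distinct integers; summing over the $K$ columns gives the identity $\sum_s g_s = Kx$. Cauchy--Schwarz then yields $\sum_s g_s^2 \geq (Kx)^2/S$, and it suffices to establish the matching upper bound
\[
\sum_{s=0}^{S-1} g_s(g_s-1) \;\leq\; \frac{K^2 xZ}{y},
\]
because substituting this into the Cauchy--Schwarz bound and rearranging recovers $S(y+KZ) \geq Kxy$.

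To prove this upper bound I plan to double-count the $\ast$-cells of $\mathbf{P}$ with a carefully chosen multiplicity. Let $R(j)$ and $C(k)$ denote the sets of distinct integers in row $j$ and column $k$. A crucial consequence of C3 is that every non-$\ast$ cell $(j,k)$ with value $s$ satisfies $R(j)\cap C(k) = \{s\}$: any other $s' \in R(j)\cap C(k)$ would force $p_{j,k}\in\{s',\ast\}$ via the structure of $\mathbf{P}^{(s')}$, contradicting $p_{j,k}=s$. Hence for every star $(j,k)$ the multiplicity $n_{j,k} := |R(j)\cap C(k)|$ is well-defined, and a direct double-count gives the identity $\sum_{(j,k)\text{ star}} n_{j,k} = \sum_s g_s(g_s-1)$. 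I will then group the left side by column using $\sum_{s\in C(k)}(g_s-1) = \sum_{j:\text{star in }k} n_{j,k}$, and combine the pointwise bound $n_{j,k}\leq x$ with the row-capacity constraint $R_j \leq K$ (where $R_j$ is the number of non-$\ast$ entries in row $j$) and the fact that only the \emph{regular} integers of row $j$, i.e.\ those $s$ with $j^{(s)}\neq j$, contribute to $n_{j,k}$ at a star.

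The main obstacle is the conversion of these local constraints into the desired global bound. A naive application of $n_{j,k}\leq x$ yields only $\sum_s g_s(g_s-1)\leq KxZ$, which gives the weaker conclusion $S \geq Kx/(Z+1)$; this is strictly weaker than the claimed bound whenever $K<y$. To recover the sharper factor $1/y$ in $K^2xZ/y$, I plan to use a weighted double-count that simultaneously exploits the row-level inequality $A_j + q_j \leq K$, where $A_j := \sum_{s:j^{(s)}=j}g_s$ and $q_j$ is the number of integers for which row $j$ is regular, together with the column-level identity above, balancing the two in the ratio $y = x+Z$. I expect this balancing step to be the technically delicate part of the argument.
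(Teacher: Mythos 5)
Your setup is correct and is essentially the same as the paper's: your $g_s$ is the paper's $r_s/2$, your identity $\sum_{(j,k)\ \mathrm{star}} n_{j,k}=\sum_s g_s(g_s-1)$ is exactly the paper's count $W$ of stars inside the subarrays $\mathbf{P}^{(s)}$, your Cauchy--Schwarz step $\sum_s g_s^2\ge (Kx)^2/S$ is the paper's first convexity bound, and your observation that only integers occurring exactly once in row $j$ can contribute to $n_{j,k}$ at a star is the paper's key structural remark. The algebraic reduction of \eqref{eq-Lowebound-K<F-1} to $\sum_s g_s(g_s-1)\le K^2xZ/y$ checks out, as does your diagnosis that the pointwise bound $n_{j,k}\le x$ alone is too weak when $K<y$.

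The problem is that the entire content of the lemma sits in the ``balancing step'' you defer: you assemble the right ingredients ($n_{j,k}\le q_j$, the row budget $A_j+q_j\le K$, and the aggregates $\sum_j q_j=\sum_s g_s=Kx$, $\sum_j z_j=KZ$ where $z_j$ is the number of stars in row $j$), but you never produce the inequality, and ``balancing the two in the ratio $y=x+Z$'' is not an argument. So as written the proof stops exactly where the work begins. The gap is fillable along your lines: from $n_{j,k}\le q_j$ you get $W\le\sum_j q_jz_j$, and since $q_j+z_j\le K$ while $(q,z)\mapsto qz/(q+z)$ is concave and homogeneous of degree one (hence superadditive),
\[
\sum_j q_jz_j\;\le\;K\sum_j\frac{q_jz_j}{q_j+z_j}\;\le\;K\cdot\frac{\bigl(\sum_j q_j\bigr)\bigl(\sum_j z_j\bigr)}{\sum_j q_j+\sum_j z_j}\;=\;K\cdot\frac{(Kx)(KZ)}{Kx+KZ}\;=\;\frac{K^2xZ}{y},
\]
which is precisely your target. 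The paper instead closes this step by asserting that the extremal configuration has every row either purely ``special'' ($y_i=0$) or purely ``regular'' ($x_i=0$), deducing that there are at least $x$ special rows and hence at most $y$ regular ones, and applying a second convexity bound over those $y$ rows; that route needs the (not obviously true, and only informally justified in the paper) claim that one may assume $\sum_i x_iy_i=0$. Either way, until you supply the actual inequality, the crucial step of your proof is missing.
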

\begin{proof}
Assume that an integer $s\in [0,S)$ occurs exactly $r_s$ times in $\mathbf{P}$. Then the subarray $\mathbf{P}^{(s)}$ in \eqref{Eqn_Matrix_1} has $\frac{r_s}{2}+1$ rows, $\frac{r_s}{2}$ columns and exactly $\frac{r_s}{2}(\frac{r_s}{2}-1)$ stars. Then the total number of stars in $\mathbf{P}^{(s)}$, $s=0$, $1$, $\ldots$, $S-1$, is
$$W=\sum\limits_{s=0}^{S-1}\frac{r_s}{2}(\frac{r_s}{2}-1)$$

On the other hand, we can estimate the value of $W$ in a different way. For each integer $i\in[0,F)$, we assume that row $i$ has exactly $x_i+y_i$ integer entries where there are $x_i$ integers entries of which occurs at least twice, say $s_{i,1}$, $\ldots$, $s_{i,x_i}$ and $y_i$ integer entries of which occurs exact once, say $s_{i,x_i+1}$, $\ldots$, $s_{i,x_i+y_i}$. By \eqref{Eqn_Matrix_1} we have that each star of the $i$-th row occurs in at most $y_i$ times in $\mathbf{P}^{(s_{i,x_i+1})}$, $\ldots$, $\mathbf{P}^{(s_{i,x_i+y_i})}$ since it occurs in $\mathbf{P}^{(s_{i,h})}$ at most once for any $h\in (x_i,x_i+y_i]$.
%\begin{itemize}
%\item If these integers are all different, then
%each star of the $i$-th row occurs in at most $r'_i$ times in $\mathbf{P}^{(s_{1,i})}$, $\ldots$, $\mathbf{P}^{(s_{r'_i,i})}$ since it occurs in $\mathbf{P}^{(s_{h,i})}$ at most once for any $h\in [1,r'_i]$;
%\item If there exist at lest two integers with the same value, without loss of generality we assume that $s_{1,i}$, then each star of the $i$-th row does not occurs in $\mathbf{P}^{(s_{1,i})}$. This implies that if the an integer occurs in $i$-row $x$ times, then each star can not be used $x$ times.
%\end{itemize}
Then the total number of occurrences of all the stars of $\mathbf{P}$ in $\mathbf{P}^{(s)}$, $s=0$, $1$, $\ldots$, $S-1$, is at most $$W'=\sum\limits_{i=0}^{F-1}y_i(K-x_i-y_i).$$
Clearly $W\leq W'$, i.e., $\sum_{s=0}^{S-1}\frac{r_s}{2}(\frac{r_s}{2}-1)\leq\sum_{i=0}^{F-1}y_i(K-x_i-y_i)$.
Then we have
\begin{eqnarray}
\label{eq-inquality}
\sum\limits_{s=0}^{S-1} r^2_s+4\sum\limits_{i=0}^{F-1}y^2_i \leq 2\sum\limits_{s=0}^{S-1}r_s+4K\sum\limits_{i=0}^{F-1}y_i-4\sum\limits_{i=0}^{F-1}x_i y_i.
\end{eqnarray} In fact we can get the number of integer entries by counting the occurrence number of each integer $s$ and the integer entries in each row respectively, i.e.,
$$K(F-Z)=\sum_{s=0}^{S-1}r_s= \sum_{i=0}^{F-1}(x_i+y_i).$$ Then \eqref{eq-inquality} can be written as
\begin{eqnarray}
\label{eq-IN 1}
\sum\limits_{s=0}^{S-1}r^2_s+4\sum\limits_{i=0}^{F-1}{y}^2_i\leq 4K\sum\limits_{i=0}^{F-1}y_i-4\sum\limits_{i=0}^{F-1}x_iy_i+2K(F-Z).
\end{eqnarray}
Moreover, by the convexity
\begin{eqnarray*}
\label{eq-IN 2}
\begin{split}
&\sum\limits_{s=0}^{S-1}r^2_s\geq\frac{1}{S}\left(\sum\limits_{s=0}^{S-1}r_s\right)^2=\frac{K^2(F-Z)^2}{S},%\\
%&\sum\limits_{i=0}^{F-1}{r'}^2_i\geq\frac{1}{F}\left(\sum\limits_{i=0}^{F-1}r'_i\right)^2=\frac{K^2(F-Z)^2}{F},
\end{split}\end{eqnarray*}
\eqref{eq-IN 1} can be written as
\begin{eqnarray}
\label{eq-IN 3}
\frac{K^2(F-Z)^2}{S}+4\sum\limits_{i=0}^{F-1}{y}^2_i\leq 2K(F-Z)+4K\sum\limits_{i=0}^{F-1}y_i-4\sum\limits_{i=0}^{F-1}x_iy_i.
\end{eqnarray}
From the property C2 of SPDA, $\sum\limits_{i=0}^{F-1}y_i>0$ always holds. So $S$ can get the minimize value if and only if
$$\sum\limits_{i=0}^{F-1}x_iy_i=0.$$
This implies that for each row $i$, $x_i=0$ or $y_i=0$ always holds, i.e., in each row all the integers occuring at least twice or occuring exactly once. Without loss of generality, denote the rows where each integer entry occurs exactly once by $ [0,F')$. We have
\begin{eqnarray}
\label{eq-IN 2-1}
K(F-Z)=2K\sum\limits_{i=0}^{F'-1}y_i.%=2K(F'-Z)=2K(F-F').
\end{eqnarray}Submitting the following convexity
\begin{eqnarray*}
\label{eq-IN 2}
\begin{split}
%&\sum\limits_{s=0}^{S-1}r^2_s\geq\frac{1}{S}\left(\sum\limits_{s=0}^{S-1}r_s\right)^2=\frac{K^2(F-Z)^2}{S},%\\
&\sum\limits_{i=0}^{F'-1}{y}^2_i\geq\frac{1}{F'}\left(\sum\limits_{i=0}^{F'-1}y_i\right)^2=\frac{K^2(F-Z)^2}{4F'}
\end{split}\end{eqnarray*} and \eqref{eq-IN 2-1} into \eqref{eq-IN 3}, we have
\begin{eqnarray}
\label{eq-IN 4}
\frac{K(F-Z)}{S}+\frac{K(F-Z)}{F'}\leq 2+2K.
\end{eqnarray}
where the equalities hold if and only if $r_0=r_1=\ldots=r_{S-1}$ and $y_0=y_1=\ldots=y_{F'-1}$ respectively. Clearly the number of the rows where all the integer entries occur at least twice is at least $\frac{K(F-Z)}{2K}=\frac{F-Z}{2}$. In this case $F'=F-\frac{F-Z}{2}$. Then \eqref{eq-IN 4} can be written as
\begin{eqnarray*}
\frac{K(F-Z)}{S}\leq 2+2K-\frac{2K(F-Z)}{F+Z}.
\end{eqnarray*}
Then the inequality \eqref{eq-Lowebound-K<F-1} can be obtained. And the equality in \eqref{eq-Lowebound-K<F-1} holds if and only if
\begin{itemize}
\item $r_0=r_1=\ldots=r_{S-1}=\frac{K(F-Z)}{S}$,
\item $x_0$ $=x_1$ $=\ldots$ $=x_{\frac{F+Z}{2}-1}=0$, $y_0$ $=y_1$ $=\ldots$ $=y_{\frac{F+Z}{2}-1}=\frac{F-Z}{2}$, and  are positive integers.
\item $x_{\frac{F+Z}{2}}$ $=x_{\frac{F+Z}{2}+1}$ $=\ldots$ $=x_{F-1}=K$, $y_{\frac{F+Z}{2}}$ $=y_{\frac{F+Z}{2}+1}$ $=\ldots$ $=y_{F-1}=0$.
\end{itemize}%
The proof is completed.
\end{proof}
From Lemma \ref{le-lower bound}, the following result can be obtained.
\begin{theorem}
\label{th-lower-bound}
If a $(K,M,N)$ secure coded caching scheme can be realized by a $(K,F,Z,S)$ SPDA with $M/N=\frac{F+Z}{2F}$, then its minimum transmission rate
\begin{eqnarray}
\label{eq-Lowebound-R}
R=\frac{S}{F}\geq \frac{1}{2}\cdot\frac{K}{1+\frac{Z}{F}+2K\frac{Z}{F}}\left(1-\frac{Z}{F}\right).
\end{eqnarray}%
\end{theorem}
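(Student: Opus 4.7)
The plan is to derive Theorem \ref{th-lower-bound} as a direct algebraic consequence of Lemma \ref{le-lower bound}. All the combinatorial work has already been carried out in the proof of that lemma (counting stars in each subarray $\mathbf{P}^{(s)}$, applying convexity to the row partition according to whether integer entries occur once or at least twice, and minimizing in $F'$), so the only remaining task is to translate the lower bound on $S$ into a lower bound on $R$.

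First I would invoke Lemma \ref{le-lower bound} to obtain
$$S \;\geq\; \frac{K(F-Z)}{2+2K-\dfrac{2K(F-Z)}{F+Z}}.$$
Since the scheme produced by Algorithm \ref{alg:SPDA} has rate exactly $R=S/F$ by Theorem \ref{th-main}, dividing both sides by $F$ gives a lower bound on $R$.

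Next I would clear the nested fraction in the denominator by multiplying numerator and denominator by $F+Z$. A short expansion shows that
$$(2+2K)(F+Z) - 2K(F-Z) \;=\; 2F+2Z+4KZ \;=\; 2(F+Z+2KZ),$$
so the bound becomes
$$R \;\geq\; \frac{K(F-Z)(F+Z)}{2F(F+Z+2KZ)}.$$
Finally, I would divide numerator and denominator by $F^{2}$ so that the expression is written purely in terms of the ratio $Z/F$; since $1+Z/F\geq 1$, dropping that factor from the numerator preserves the inequality and yields the stated form
$$R \;\geq\; \frac{1}{2}\cdot\frac{K}{1+\frac{Z}{F}+2K\frac{Z}{F}}\left(1-\frac{Z}{F}\right).$$

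There is no real obstacle here, since Lemma \ref{le-lower bound} already contains the substantive content. The only care required is bookkeeping the algebra and keeping track of the identification $M/N=(F+Z)/(2F)$, equivalently $Z/F=2M/N-1$, if one wishes to re-express the bound directly in terms of the memory ratio rather than $Z/F$.
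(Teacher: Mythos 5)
Your proposal is correct and matches the paper's route exactly: the paper states Theorem \ref{th-lower-bound} as an immediate consequence of Lemma \ref{le-lower bound} with no further argument, and your algebra (clearing the nested fraction to get $R\geq \frac{K(F-Z)(F+Z)}{2F(F+Z+2KZ)}$ and then discarding the factor $1+Z/F\geq 1$ to reach the stated, slightly weaker form) supplies precisely the missing bookkeeping. The only point worth flagging is the one you implicitly rely on, namely that the denominator $2+2K-\frac{2K(F-Z)}{F+Z}$ is positive (which holds since $\frac{F-Z}{F+Z}<1$), so dividing by $F$ and rearranging preserves the inequality.
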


For the fixed parameters $K$, $F$ and $Z$, a $(K,F,Z,S)$-SPDA with the minimum $S$ is called optimal. In the following we will propose three class of constructions of SPDAs, where one of them is optimal.
\section{Constructions of SPDAs}
\label{sec-construciton}
In this section, three classes of SPDAs are constructed directly where the first class is optimal. First the following construction is necessary for our constructions.
\begin{construction}
\label{const-PDA}
For any positive integers $K$ and $t$ with $t<K$, let $\mathcal{F}={[0,K]\choose t}$, $\mathcal{K}=[0,K)$. We can define a ${K\choose t}\times K$ array $\mathbf{Q}=(q_{A,k})_{A\in \mathcal{F},k\in \mathcal{K}}$
by
\begin{eqnarray}\label{Eqn_Def_AN}
q_{A,k}=\left\{\begin{array}{cc}
A\cup\{k\}, & \mbox{if}~k\notin A\\
*, & \mbox{otherwise}
\end{array}
\right.
\end{eqnarray}It is easy to check that there are exact ${K-1\choose t-1}$ stars in each column of $\mathbf{Q}$.
\end{construction}
\begin{example}\label{ex-(4,4,1,6) PDA}
When $K=4$ and $t=1$, by \eqref{Eqn_Def_AN} the following array, denoted by $\mathbf{Q}_{4\times 4}$, can be obtained.
\begin{eqnarray*}
\begin{array}{c|cccc}
     &0      &1      &2      &3\\ \hline
\{0\}&*      &\{0,1\}&\{0,2\}&\{0,3\}\\
\{1\}&\{0,1\}&*      &\{1,2\}&\{1,3\}\\
\{2\}&\{0,2\}&\{1,2\}&*      &\{2,3\}\\
\{3\}&\{0,3\}&\{1,3\}&\{2,3\}&*
\end{array}
\end{eqnarray*}
\end{example}%
%
%In fact, the array generated by Construction \ref{const-PDA} is also called MN PDA can used to realize a MN schemethors in \cite{STD} pointed out that most of the deterministic coded caching scheme under uncoded placement can be represented by the appropriate PDAs. There are many constructions on PDAs. The interesting reader could refer to \cite{CJWY,CJTY,CJYT,STD,SZG,TR,YCTC,YTCC} et.
%
%
%Given a PDA $\mathbf{Q}$, assume that there are $g$ entries $q_{j_1,k_1}=q_{j_2,k_2}=\cdots=q_{j_g,k_g}=s$ where $0\le j_1,\cdots,j_g< F$ and
%$0\le k_1,\cdots,k_g<K$.  Consider the sub-array  formed by rows $j_1,\cdots,j_g$ and columns $k_1,\cdots,k_g$, which is of order $g\times g$ since $j_{h}\ne j_{l}$ and $k_h\ne k_l$ for all $1\le h\ne l\le g$ by C$1$-a. Further, applying C$1$-b we have $q_{j_h,k_l}=*$ for all $1\le h\ne l\le g$. That is to say,
%this sub-array is equivalent to the following $g\times g$ array
%\begin{eqnarray}\label{Eqn_Matrix_1-PDA}
%    \left(\begin{array}{cccc}
%      s & * & \cdots & *\\
%      * & s & \cdots & *\\
%      \vdots & \vdots &\ddots & \vdots\\
%      * & * & \cdots & s
%    \end{array}\right)
%\end{eqnarray}
%with respect to row/column permutation. Obviously if we delete the first row of \eqref{Eqn_Matrix_1}, the obtained sub-array is exact \eqref{Eqn_Matrix_1-PDA}. Naturally we can construct SPDAs by means of PDAs.
%
%\newpage
%
%
%////////////////////////////

\subsection{The case $(t+1)|K$}
\label{sub-first-case}
It is not difficult to see that when $(t+1)|K$, all elements in ${[0,K)\choose t+1}$ can be partitioned into ${K-1\choose t}$ parallel classes, say $\mathcal{D}_0,\mathcal{D}_1,\ldots,\mathcal{D}_{{K-1\choose t}-1}$, such that for all $j \in [0,{K-1\choose t})$, i) if $B,B' \in \mathcal{B}_j$, then $B\bigcap B'=\emptyset$, and ii) $\bigcup_{B\in \mathcal{B}_j}B=[0,K)$.
\begin{construction}
\label{const-1}
For any positive integers $K$ and $t$ with $(t+1)|K$, let $\mathcal{F}'=\{\mathcal{D}_0,\mathcal{D}_1,\ldots,\mathcal{D}_{{K-1\choose t}-1}\}$ and $\mathcal{K}=[0,K)$. Define a ${K-1\choose t}\times K$ array $\mathbf{Q}'=(q'_{\mathcal{D},k})_{\mathcal{D}\in\mathcal{F}',k\in \mathcal{K}}$
where $q'_{\mathcal{D},k}=D$ if $k\in D$ and $D\in \mathcal{D}$.
\end{construction}
\begin{example}\label{ex-(4,7,1,6) SPDA}
We consider the parameters in Example \ref{ex-(4,4,1,6) PDA} again, i.e., $K=4$ and $t=1$. Let $\mathcal{D}_0=\{\{0,1\},\{2,3\}\}$, $\mathcal{D}_1=\{\{0,2\},\{1,3\}\}$, $\mathcal{D}_2=\{\{0,3\},\{1,2\}\}$. From Construction \ref{const-1} the following array, denoted by $\mathbf{Q}'_{3\times 4}$, can be obtained.
\begin{eqnarray*}
\begin{array}{c|cccc}
             &0      &1      &2      &3\\ \hline
\mathcal{D}_0&\{0,1\}&\{0,1\}&\{2,3\}&\{2,3\}\\
\mathcal{D}_1&\{0,2\}&\{1,3\}&\{0,2\}&\{1,3\}\\
\mathcal{D}_2&\{0,3\}&\{1,2\}&\{1,2\}&\{0,3\}
\end{array}
\end{eqnarray*}
Replace the subsets $\{0,1\},\{0,2\}$, $\{0,3\}$, $\{1, 2\}$, $\{1, 3\}$ and $\{2, 3\}$ in the above array by $0$, $1$, $2$, $3$, $4$ and $5$ respectively. We have
\begin{eqnarray*}
\mathbf{Q}'_{3\times 4}=\left(\begin{array}{cccc}
0&0&5&5\\
1&4&1&4\\
2&3&3&2
\end{array}\right).
\end{eqnarray*}
It is easy to check that
\begin{eqnarray}
\label{eq-optimal}
\mathbf{P}=\left(\begin{array}{c}
\mathbf{Q}'_{3\times 4}\\
\mathbf{Q}_{4\times 4}
\end{array}\right)=\left(\begin{array}{cccc}
0&0&5&5\\
1&4&1&4\\
2&3&3&2\\
*&0&1&2\\
0&*&3&4\\
1&3&*&5\\
2&4&5&*
\end{array}\right).
\end{eqnarray}
is a $(4,7,1,6)$ SPDA. Let $F=7$ and $Z=1$. By \eqref{eq-Lowebound-K<F-1} in Lemma \ref{le-lower bound}, we have
$$
6=S\geq \frac{K(F-Z)}{2+2K-\frac{2 K(F-Z)}{F+Z}}=\frac{4\times(7-1)}{2+8-\frac{2\times 4(7-1)}{7+1}}=6.
$$
So the SPDA in \eqref{eq-optimal} is optimal.
\end{example}
From Example \ref{ex-(4,7,1,6) SPDA}, the following result can be obtained.
\begin{theorem}
\label{th-SPDA-optimal}
For any positive integers $K$ and $t$ with $(t+1)|K$, there always exists an optimal $(K,\frac{2K-t}{K-t}{K-1\choose t}, {K-1\choose t-1},{K\choose t+1})$ SPDA which gives a $(K,M,N)$ secure coded caching scheme with $M/N=\frac{K}{2K-t}$, transmission rate $R=\frac{K-t}{t+1}\cdot\frac{K}{2K-t}$ and the subpacketization $\frac{2K-t}{K-t}{K-1\choose t}$.
\end{theorem}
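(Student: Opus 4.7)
The plan is to form $\mathbf{P}$ by vertically stacking $\mathbf{Q}'$ from Construction \ref{const-1} on top of $\mathbf{Q}$ from Construction \ref{const-PDA}, after relabelling each $(t+1)$-subset $B \in {[0,K) \choose t+1}$ by a fixed integer in $[0,S)$ with $S = {K \choose t+1}$. The resulting array has size $F \times K$ with $F = {K \choose t} + {K-1 \choose t}$, and using Pascal's identity ${K \choose t} = {K-1 \choose t} + {K-1 \choose t-1}$ one sees $F = \frac{2K-t}{K-t}{K-1 \choose t}$, matching the statement. The array in \eqref{eq-optimal} illustrates this stacking for $K = 4$, $t = 1$.

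Next I would check the three SPDA axioms for $\mathbf{P}$. Stars appear only in $\mathbf{Q}$, and Construction \ref{const-PDA} gives ${K-1 \choose t-1}$ stars per column, establishing C1 with $Z = {K-1 \choose t-1}$. For C2 and C3, fix a $(t+1)$-subset $B$. In $\mathbf{Q}$ the label $B$ occurs precisely at the $t+1$ positions $(A,k)$ with $k \in B$ and $A = B \setminus \{k\}$; in $\mathbf{Q}'$ it occurs in the unique row indexed by the parallel class $\mathcal{D}_j$ containing $B$, at all $t+1$ columns $k \in B$. Hence $B$ appears $2(t+1) \geq 4$ times (C2). Moreover, $B$ is confined to the $t+1$ columns indexed by $B$ itself and to the $t+2$ rows enumerated above. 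In the $\mathbf{Q}'$-row all $t+1$ of those column-entries equal $B$, while in the $\mathbf{Q}$-row indexed by $A = B \setminus \{k\}$ exactly one entry equals $B$ (at column $k$) and the remaining $t$ entries are stars (since the columns in $B \setminus \{k\} = A$ hit the diagonal of $\mathbf{Q}$). After a row permutation placing the $\mathbf{Q}'$-row at the top, this subarray becomes exactly \eqref{Eqn_Matrix_1}, giving C3.

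For optimality I would verify that the equality conditions in the proof of Lemma \ref{le-lower bound} hold. Each label appears exactly $r_s = 2(t+1)$ times, so $r_s$ is constant in $s$; each of the ${K \choose t}$ rows of $\mathbf{Q}$ satisfies $x_i = 0$ and $y_i = K - t$ (every integer entry is unique in its row); each of the ${K-1 \choose t}$ rows of $\mathbf{Q}'$ satisfies $y_i = 0$ and $x_i = K$ (the $K/(t+1)$ block labels each repeat $t+1$ times). Thus $\sum_i x_i y_i = 0$, the number $F'$ of light rows equals ${K \choose t} = (F+Z)/2$, and $y_i$ on the light rows is the constant $K(F-Z)/(2F') = K-t$ required for equality. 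Substituting into Lemma \ref{le-lower bound} shows $S = {K \choose t+1}$ meets the bound, so $\mathbf{P}$ is an optimal SPDA. Plugging these parameters into Theorem \ref{th-main} yields the claimed scheme with $M/N = (F+Z)/(2F) = K/(2K-t)$ and $R = S/F = \frac{K-t}{t+1} \cdot \frac{K}{2K-t}$.

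The main obstacle is condition C3: one must verify simultaneously that the label $B$ does not leak into any column outside $B$ (so the column-support of $s = B$ has size exactly $t+1$) and that stars populate the off-diagonal of the induced $(t+1) \times (t+1)$ block coming from $\mathbf{Q}$. This is the point where the row/column structure of Construction \ref{const-PDA} must mesh correctly with the block structure of the parallel-class decomposition used in Construction \ref{const-1}. Every other step is routine counting, once the resolvable partition assumed in Construction \ref{const-1} is available (which is guaranteed by Baranyai's theorem whenever $(t+1) \mid K$).
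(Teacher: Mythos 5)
Your proposal is correct and follows essentially the same route as the paper: stack $\mathbf{Q}'$ from Construction \ref{const-1} over $\mathbf{Q}$ from Construction \ref{const-PDA}, verify C1--C3 by locating the $2(t+1)$ occurrences of each $(t+1)$-subset, and invoke Lemma \ref{le-lower bound} for optimality before applying Theorem \ref{th-main}. The only (immaterial) difference is that you certify optimality by checking the equality conditions of the lemma, whereas the paper substitutes the parameters directly into \eqref{eq-Lowebound-K<F-1}; your C3 verification and the remark on Baranyai's theorem are in fact more explicit than the paper's.
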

\begin{proof}
Let us consider the array $\mathbf{P}={ \mathbf{Q}'\choose\mathbf{Q}}$ where $\mathbf{Q}$ is generated by \eqref{Eqn_Def_AN} and $\mathbf{Q}'$ is generated by Construction \ref{const-1}. The number of stars in each column is ${K-1\choose t-1}$ since each column of $\mathbf{Q}$ has ${K-1\choose t-1}$ stars and each column of $\mathbf{Q}'$ does not have star. So $Z={K-1\choose t-1}$. Furthermore
$S={K\choose t+1}$ since each $(t+1)$ subset occurs in $\mathbf{P}$.

Now let us consider the Condition C3 of Definition \ref{def-SPDA}. Without loss of generality let us consider $\mathcal{S}=\{0,1,\ldots, t\}$. The rows of $\mathbf{P}$, which consist of the rows of $\mathbf{Q}$ indexed by the $t$-subsets of $\mathcal{S}$ and a row of $\mathbf{Q}'$ indexed by a partition, say $\mathcal{D}\in \mathcal{F}'$, satisfying $\mathcal{S}\in \mathcal{D}$, and the columns indexed by the elements in $\mathcal{S}$, i.e., $0$, $1$, $\ldots$, $t$, form the following $(t+1)\times t$ subarray.
\begin{eqnarray*}
\begin{array}{c|cccc}
 & 0& 1&\cdots&t\\ \hline
\mathcal{D}& \mathcal{S} & \mathcal{S} & \cdots  & \mathcal{S}\\
\{0,1,\ldots,t-1\} &     \mathcal{S} & * & \cdots & *\\
 \{1,\ldots,t-1,t\}  &   * & \mathcal{S} & \cdots & *\\
 \vdots&     \vdots& \vdots &\ddots & \vdots\\
\{0,\ldots,t-2,t\}&      * & * & \cdots &\mathcal{S}
    \end{array}
\end{eqnarray*}
Clearly the above subarray satisfies C3 of Definition \ref{def-SPDA}. Then $\mathbf{P}$ is a $(K,\frac{2K-t}{K-t}{K-1\choose t}, {K-1\choose t-1},{K\choose t+1})$ SPDA. Submitting the values of $K$, $Z$, $F$ into \eqref{eq-Lowebound-K<F-1} in Lemma \ref{le-lower bound}, we have
\begin{eqnarray*}
{K\choose t+1}=S\geq \frac{K(\frac{2K-t}{K-t}{K-1\choose t}-{K-1\choose t-1})}{2+2K-\frac{2 K(\frac{2K-t}{K-t}{K-1\choose t}-{K-1\choose t-1})}{\frac{2K-t}{K-t}{K-1\choose t}+{K-1\choose t-1}}}=\frac{K{K-1\choose t}}{1+K-(K-t)}={K\choose t+1}.
\end{eqnarray*}
So $\mathbf{P}$ is optimal. Finally from Theorem \ref{th-main}, the desired coded caching scheme can be obtained. Our proof is completed.
%we can get a $(K,M,N)$ coded caching scheme with $M/N=\frac{K}{2K-t}$, transmission rate $R=\frac{K-t}{t+1}\cdot\frac{K}{2K-t}$ and the subpacketization ${K\choose t}+{K-1\choose t}$ which satisfies private demand for any user.

%Then the proof is completed.
\end{proof}
In the following we will propose the other two classes of SPDAs by modifying Construction \ref{const-1}.
\subsection{The case $(t+1)\nmid K$ and $(t+1)\leq\lfloor K/2\rfloor$}

\begin{construction}
\label{const-2}
For any positive integers $K$ and $t$ with $(t+1)\nmid K$ and $(t+1)\leq\lfloor K/2\rfloor$, let $K'$ be the minimum integer satisfying $K'>K$ and $(t+1)|K'$. Let $\mathcal{F}'=\{\mathcal{D}_0,\mathcal{D}_1,\ldots,\mathcal{D}_{{K'-1\choose t}-1}\}$, i.e., the parallel classes, and $\mathcal{K}=[0,K)$. Define a ${K'-1\choose t}\times K$ array $\mathbf{Q}'=(q'_{\mathcal{D},k})_{\mathcal{D}\in \mathcal{F}',k\in \mathcal{K}}$ where
\begin{eqnarray}\label{Eqn_Def-3}
q'_{\mathcal{D},k}=\left\{\begin{array}{cc}
D, & \mbox{if}~k\in D, D\cap [K,K')=\emptyset, D\in \mathcal{D}, \\
*, & \mbox{otherwise}
\end{array}
\right.
\end{eqnarray}
\end{construction}
\begin{example}\label{ex-(4,101,6) SPDA}
Assume that $K=5$ and $t=1$. Then $K'=6$. Let the parallel classes
\begin{eqnarray*}
\begin{array}{ccc}
\mathcal{D}_0=\{\{0,1\},&\{2,3\},&\{4,5\}\}\\
\mathcal{D}_1=\{\{0,2\},&\{3,5\},&\{1,4\}\}\\
\mathcal{D}_2=\{\{0,3\},&\{2,4\},&\{1,5\}\}\\
\mathcal{D}_3=\{\{0,4\},&\{2,5\},&\{1,3\}\}\\
\mathcal{D}_4=\{\{0,5\},&\{3,4\},&\{1,2\}\}
\end{array}
\end{eqnarray*} From Construction \ref{const-2} the following array, denoted by $\mathbf{Q}'_{5\times 5}$, can be obtained.
\begin{eqnarray*}
\begin{array}{c|ccccc}
             &0      &1      &2      &3      &4\\ \hline
\mathcal{D}_0&\{0,1\}&\{0,1\}&\{2,3\}&\{2,3\}&*\\
\mathcal{D}_1&\{0,2\}&\{1,4\}&\{0,2\}&*      &\{1,4\}\\
\mathcal{D}_2&\{0,3\}&*      &\{2,4\}&\{0,3\}&\{2,4\}\\
\mathcal{D}_3&\{0,4\}&\{1,3\}&\{1,3\}&*      &\{0,4\}   \\
\mathcal{D}_4&*      &\{1,2\}&\{1,2\}&\{3,4\}&\{3,4\}
\end{array}
\end{eqnarray*}
\end{example}

We can see that each column of $\mathbf{Q}'$ generated by Construction \ref{const-2} has ${K'-1\choose t}-{K-1\choose t}$ stars. Similar to the proof of Theorem \ref{th-SPDA-optimal}, we claim that the array $\mathbf{P}={\mathbf{Q}'\choose\mathbf{Q}}$ is also a $(K,{K\choose t}+{K'-1\choose t}, {K-1\choose t-1}+{K'-1\choose t}-{K-1\choose t},{K\choose t+1})$ SPDA, where $\mathbf{Q}$ is generated by \eqref{Eqn_Def_AN} and $\mathbf{Q}'$ is generated by Construction \ref{const-2}. Then from Theorem \ref{th-main} the following result can be obtained.
\begin{theorem}
\label{th-SPDA-2}
For any positive integers $K$ and $t$ with $(t+1)\nmid K$ and $\lfloor K/2\rfloor>(t+1)$, let $K'$ is the minimum integer satisfying $K'>K$ and $(t+1)|K'$. There always exists a $(K,{K\choose t}+{K'-1\choose t}, {K-1\choose t-1}+{K'-1\choose t}-{K-1\choose t},{K\choose t+1})$ SPDA which gives a $(K,M,N)$ secure coded caching scheme with $M/N=\frac{{K-1\choose t-1}+{K'-1\choose t}}{{K\choose t}+{K'-1\choose t}}$,
transmission rate $R=\frac{{K\choose t+1}}{{K\choose t}+{K'-1\choose t}}$ and the subpacketization ${K\choose t}+{K'-1\choose t}$.
\end{theorem}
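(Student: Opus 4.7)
The plan is to mirror the argument used in Theorem \ref{th-SPDA-optimal}. I would take $\mathbf{P}={\mathbf{Q}'\choose\mathbf{Q}}$, where $\mathbf{Q}$ is the ${K\choose t}\times K$ array from Construction \ref{const-PDA} (built on $[0,K)$) and $\mathbf{Q}'$ is the ${K'-1\choose t}\times K$ array from Construction \ref{const-2} (built from a resolution of ${[0,K')\choose t+1}$ into parallel classes, then ``trimmed'' by starring any entry whose block meets $[K,K')$). Once $\mathbf{P}$ is shown to satisfy the three conditions of Definition \ref{def-SPDA} with the stated parameters, Theorem \ref{th-main} converts it into the claimed secure caching scheme.

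For Condition C1, I would count stars column by column. In $\mathbf{Q}$ each column has exactly ${K-1\choose t-1}$ stars by \eqref{Eqn_Def_AN}. In $\mathbf{Q}'$, fix column $k\in[0,K)$; a row $\mathcal{D}$ contributes a non-star iff the unique block $D\in\mathcal{D}$ containing $k$ lies entirely in $[0,K)$, and the number of $(t+1)$-subsets of $[0,K)$ through $k$ is ${K-1\choose t}$. Hence each column of $\mathbf{Q}'$ carries ${K'-1\choose t}-{K-1\choose t}$ stars, and summing gives $Z={K-1\choose t-1}+{K'-1\choose t}-{K-1\choose t}$. For C2, each $(t+1)$-subset $S\subseteq[0,K)$ appears $t+1$ times in $\mathbf{Q}$ (once per choice of $k\in S$) and $t+1$ additional times in the unique row of $\mathbf{Q}'$ whose parallel class contains $S$, so $S$ occurs $2(t+1)\ge 4$ times.

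The main work is Condition C3. Fix an integer $S$, regarded as a $(t+1)$-subset of $[0,K)$. The rows of $\mathbf{Q}$ containing $S$ are precisely the $t+1$ rows indexed by $A=S\setminus\{k_0\}$ for $k_0\in S$, and on row $A$ the value $S$ occurs only in column $k_0$, while columns $k\in S\setminus\{k_0\}=A$ carry $*$ and columns $k\notin S$ carry a different $(t+1)$-subset. In $\mathbf{Q}'$, by resolvability $S$ lies in exactly one parallel class $\mathcal{D}_S$; on row $\mathcal{D}_S$ the entries in columns $k\in S$ are all $S$, while entries in other columns belong to some disjoint block of $\mathcal{D}_S$ (or are $*$). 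Restricting to the $t+1$ columns indexed by $S$ and the $t+2$ rows above yields exactly the pattern in \eqref{Eqn_Matrix_1}: the top row (from $\mathbf{Q}'$) is all $S$, and the remaining $(t+1)\times(t+1)$ sub-block carries $S$ on the diagonal and $*$ elsewhere.

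The parameters then read $F={K\choose t}+{K'-1\choose t}$, $Z={K-1\choose t-1}+{K'-1\choose t}-{K-1\choose t}$, $S={K\choose t+1}$; using Pascal's identity ${K\choose t}={K-1\choose t-1}+{K-1\choose t}$, the ratio $\frac{M}{N}=\frac{F+Z}{2F}$ simplifies to $\frac{{K-1\choose t-1}+{K'-1\choose t}}{{K\choose t}+{K'-1\choose t}}$, and $R=S/F$ is immediate. The step I expect to demand the most care is C3, specifically arguing that no \emph{other} rows or columns of $\mathbf{P}$ contain $S$ (checking all rows of $\mathbf{Q}'$ outside $\mathcal{D}_S$ and all columns $k\in[0,K)\setminus S$ in both blocks); the hypothesis $\lfloor K/2\rfloor>(t+1)$ enters only to guarantee that $K'$ is still close enough to $K$ that the parallel-class restriction produces a genuine reduction and that the multiplicity count $2(t+1)\ge 4$ is meaningful.
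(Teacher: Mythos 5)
Your proposal is correct and takes essentially the same route as the paper, which for this case only states the star count per column of $\mathbf{Q}'$ and asserts that the verification is ``similar to the proof of Theorem~\ref{th-SPDA-optimal}''; your column-wise star count, the $2(t+1)\ge 4$ occurrence count, and the $(t+2)\times(t+1)$ subarray check of C3 are precisely the details that assertion leaves implicit. One minor remark: the hypothesis $\lfloor K/2\rfloor>t+1$ is not actually needed for $\mathbf{P}$ to be a valid SPDA (Baranyai-type resolvability only needs $(t+1)\mid K'$); it merely delimits the parameter regime where this construction is preferred over that of Theorem~\ref{th-SPDA-3}.
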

\subsection{The case $(t+1)\nmid K$ and $\lfloor K/2\rfloor<(t+1)$}
\begin{construction}
\label{const-3}
For any positive integers $K$ and $t$ with $(t+1)\nmid K$ and $K/2<(t+1)$, let $\mathcal{F}'={[0,K)\choose t+1}$, $\mathcal{K}=[0,K)$, define a ${K\choose t+1}\times K$ array $\mathbf{Q}'=(q'_{D,k})_{D\in \mathcal{F}',k\in \mathcal{K}}$ where
\begin{eqnarray}\label{Eqn_Def-3}
q'_{D,k}=\left\{\begin{array}{cc}
\mathcal{S}, & \mbox{if}~k\in D\\
*, & \mbox{otherwise}
\end{array}
\right.
\end{eqnarray}
\end{construction}
We can see that each column of $\mathbf{Q}'$ generated by Construction \ref{const-3} has ${K-1\choose t+1}$ stars. Similar to the proof of Theorem \ref{th-SPDA-optimal}, we claim that the array $\mathbf{P}={\mathbf{Q}'\choose\mathbf{Q}}$ is also a $(K,{K\choose t}+{K\choose t+1}, {K-1\choose t-1}+{K-1\choose t+1},{K\choose t+1})$ SPDA, where $\mathbf{Q}$ is generated by \eqref{Eqn_Def_AN} and $\mathbf{Q}'$ is generated by Construction \ref{const-1}. Then the following result can be obtained.
\begin{theorem}
\label{th-SPDA-3}
For any positive integers $K$ and $t$ with $(t+1)\nmid K$ and $\lfloor K/2\rfloor<(t+1)$, there always exists a $(K,{K\choose t}+{K\choose t+1}, {K-1\choose t-1}+{K-1\choose t+1},{K\choose t+1})$ SPDA which gives a $(K,M,N)$ secure coded caching scheme with $M/N=1-\frac{(K-t)(t+1)}{K(K+1)}$, transmission rate $R=\frac{K-t}{K+1}$ and the subpacketization ${K\choose t}+{K\choose t+1}$.
\end{theorem}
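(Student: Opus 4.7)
The plan is to verify that the stacked array $\mathbf{P}=\binom{\mathbf{Q}'}{\mathbf{Q}}$, with $\mathbf{Q}$ from Construction \ref{const-PDA} (size ${K\choose t}\times K$) and $\mathbf{Q}'$ from Construction \ref{const-3} (size ${K\choose t+1}\times K$), satisfies the three defining conditions of a $(K,F,Z,S)$ SPDA with
\[
F={K\choose t}+{K\choose t+1}, \quad Z={K-1\choose t-1}+{K-1\choose t+1}, \quad S={K\choose t+1},
\]
and then to invoke Theorem \ref{th-main} to pass to the coded caching parameters.

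First I would check C1 by counting stars column by column. In $\mathbf{Q}$, column $k$ has exactly ${K-1\choose t-1}$ stars, namely at the rows indexed by $t$-subsets $A$ containing $k$. In $\mathbf{Q}'$, column $k$ has a star exactly at rows indexed by those $(t+1)$-subsets $D$ with $k\notin D$, giving ${K-1\choose t+1}$ stars. Summing yields $Z$. For C2, I would fix a symbol $D\in{[0,K)\choose t+1}$: it appears $(t+1)$ times in $\mathbf{Q}'$ (once in each of the $t+1$ columns $k\in D$ of its own row) and $(t+1)$ times in $\mathbf{Q}$ (once as $q_{D\setminus\{k\},k}=D$ for each $k\in D$), for a total of $2(t+1)\ge 4$ since $t\ge 1$.

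The heart of the argument, and the step requiring the most care, is C3. Fix any symbol $D$. The rows in which $D$ appears are exactly row $D$ of $\mathbf{Q}'$ together with the $t+1$ rows $A=D\setminus\{k\}$ of $\mathbf{Q}$, one for each $k\in D$; the columns in which $D$ appears are exactly the $t+1$ columns indexed by the elements of $D$. Restricting $\mathbf{P}$ to these $(t+2)$ rows and $(t+1)$ columns, the row from $\mathbf{Q}'$ yields the all-$D$ row, while the row $A=D\setminus\{k_0\}$ from $\mathbf{Q}$ has entry $D$ in column $k_0$ (since $A\cup\{k_0\}=D$) and entry $*$ in every other column $k\in D$ (since then $k\in A$). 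Up to row permutation this is precisely the shape in \eqref{Eqn_Matrix_1}, so C3 holds.

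It remains to translate the SPDA parameters into the caching parameters via Theorem \ref{th-main}. The subpacketization is $F={K\choose t}+{K\choose t+1}$ by construction. For the memory ratio, applying the identity $(K-t){K\choose t}=(t+1){K\choose t+1}$ to $F-Z=2{K-1\choose t}$ (obtained from Pascal's rule via ${K\choose t}-{K-1\choose t-1}={K-1\choose t}$ and ${K\choose t+1}-{K-1\choose t+1}={K-1\choose t}$) gives $\frac{F-Z}{2F}=\frac{(K-t)(t+1)}{K(K+1)}$, hence $M/N=\frac{F+Z}{2F}=1-\frac{(K-t)(t+1)}{K(K+1)}$. The same identity gives $R=S/F=\frac{{K\choose t+1}}{{K\choose t}+{K\choose t+1}}=\frac{K-t}{K+1}$. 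The hypothesis $\lfloor K/2\rfloor<(t+1)$ plays no role in the verification itself; it only demarcates the regime in which this construction is used rather than Construction \ref{const-2}. The main obstacle is purely bookkeeping: keeping the row/column indexing of $\mathbf{Q}$ and $\mathbf{Q}'$ aligned so that the subarray structure in C3 emerges cleanly.
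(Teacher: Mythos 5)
Your proposal is correct and follows essentially the same route as the paper: the paper's argument for Theorem \ref{th-SPDA-3} is simply to stack $\mathbf{Q}'$ from Construction \ref{const-3} on top of $\mathbf{Q}$ from Construction \ref{const-PDA}, verify conditions C1--C3 as in the proof of Theorem \ref{th-SPDA-optimal}, and invoke Theorem \ref{th-main}. In fact you supply more detail than the paper does here (the star counts, the $2(t+1)\ge 4$ occurrence count, the explicit C3 subarray, and the binomial simplifications of $M/N$ and $R$), and your observation that the hypothesis $\lfloor K/2\rfloor<t+1$ is not used in the verification is consistent with the paper, where it only selects which construction applies.
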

\subsection{Performance analyses}
\label{sec-comparison}
As far as we know, the latest scheme with private demands is propose by Wan et al in \cite{WC}, i.e., the scheme listed in Table \ref{tab-known-1}. However from the parameters in Table \ref{tab-known-1} and Table \ref{tab-main}, it is hard to give a comparison between the scheme in \cite{WC} and our scheme in general. So here we take $K=10$ as an example to compare the schemes. In table \ref{tab-main} we have a scheme with private demands for any file number $N$, say Scheme1, where the subpacketization with log function and the transmission rate in red are listed in Figure \ref{Fig-M-F-1} and Figure \ref{Fig-M-R-1} respectively. When $N=50$ and $100$, we also can obtain two schemes, say Scheme2 and Scheme3, from Table \ref{tab-known-1} where the subpacketization with log function and the transmission rate in blue and orange are listed in Figure \ref{Fig-M-F-1} and Figure \ref{Fig-M-R-1} respectively.
\begin{figure}[htbp]
  \centering
  \subfigure[The logarithmic function with subpacketization]{\label{Fig-M-F-1}
  \begin{minipage}{8cm}
  \centering
  \includegraphics[width=1\textwidth]{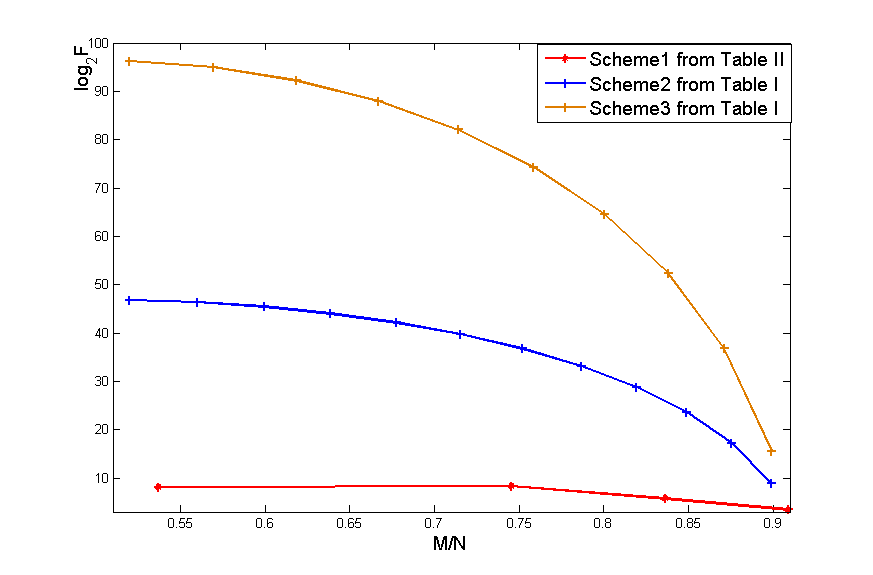}
  \end{minipage}}
  \subfigure[The transmission rate]{\label{Fig-M-R-1}
  \begin{minipage}{8cm}
  \centering
  \includegraphics[width=1\textwidth]{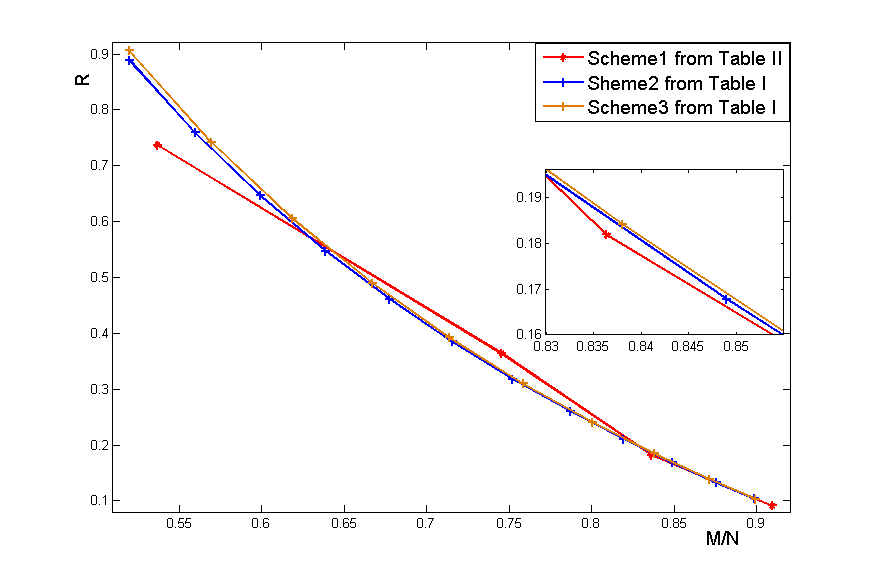}
  \end{minipage}}
  \caption{The performances of Scheme1, Scheme2 and Scheme3 from Tables \ref{tab-main} and \ref{tab-known-1} for $K=10$}
\end{figure}

From Figure \ref{Fig-M-F-1} and Figure \ref{Fig-M-R-1}, we can see that Scheme3 has larger transmission rate and subpacketization than those of Scheme1. So Figure \ref{Fig-M-F-1} and Figure \ref{Fig-M-R-1} intuitively show that when $N$ is larger the performance of the scheme in Table \ref{tab-known-1} is weaker, i.e., $N$ is larger, the subpacketization and the transmission rate of the scheme in Table \ref{tab-known-1} are larger for the fixed user number $K$.

Now let us consider the performance of Scheme1 and Scheme2 listed in Figure \ref{Fig-M-F-1} and Figure \ref{Fig-M-R-1}. By Figure \ref{Fig-M-F-1} the subpacketization of Scheme1 is much smaller than that of Scheme2 while by Figure \ref{Fig-M-R-1} the transmission rates of Scheme1, Scheme2 and Scheme3 are very close. Furthermore when $M_1/N=0.53684$ the transmission rate and the subpacketization of Scheme1 are $R_1=0.70528$ and $F_1=285$. However when $M_2/N=0.53989$ the transmission rate and the subpacketization of Scheme2 are $R_2=0.82162$ and $F_2=108018112524940$. Clearly $M_1<M_2$ but we have $R_1<R_2$ and $F_1\ll F2$. This implies that in this case our scheme has much better performance on both transmission rate and the subpacketization.

%When $N=50$ and $K=20$, from Table \ref{tab-known-1} and Table \ref{tab-main}, we have two classes of schemes with, private demand, say Scheme1 and Scheme 2. We use the blue lines denote the transmission rate and the subpacketization under function $log_2$ of Scheme1, and use the red lines denote the transmission rate and the subpacketization under function $log_2$ of Scheme2  in Figure \ref{Fig-M-F-1} and Figure \ref{Fig-M-R-1} respectively. By Figure \ref{Fig-M-F-1} the subpacketization of Scheme 2 significantly smaller than that of Scheme 1 while by Figure \ref{Fig-M-R-1} the transmission rate is just several times larger than that of Scheme 1. Furthermore when $M_1/N=0.53989$ the transmission rate and the subpacketization of Scheme1 are $R_1=0.82162$ and $F_1=108018112524940$. However, when $M_2/N=0.53684$ the transmission rate and the subpacketization of Scheme1 are $R_2=0.70528$ and $F_2=285$. Clearly $M_1>M_2$ but we have $R_1>R_2$ and $F_1\gg F2$. This implies that this case our scheme has much better performance on the transmission rate and the subpacketization.

\section{Conclusion}
\label{conclusion}
In this paper, we considered the security setting which can prevent the wiretappers from obtaining any information about the files in the library and protect the demands privacy from all the users in delivery phase. We first reformulated this security setting and introduced SPDA to characterize the placement and delivery phases for a coded caching scheme. In order to measure the transmission rate of our secure scheme, the lower bound on the parameter $S$ of a $(K,F,Z,S)$ SPDA was derived. According to our lower bound, we constructed three classes of SPDAs where one of them achieves the lower bound for any positive integer $t$, $K$ with $(t+1)|K$. Consequently three classes of secure schemes with good performance were obtained, where the transmission rate and the subpacketization are independent to the file number $N$. Finally a comparison was proposed to show that our schemes significantly reduce the subpacketizations. It is worth noting that for some parameters, our schemes have smaller transmission rate.

\bibliographystyle{IEEEtran}

\end{document}